\documentclass[12pt]{amsart}
\usepackage{amssymb}
\usepackage{amsmath}
\usepackage{amscd}
\usepackage[dvips]{color}
\newtheorem{theorem}{Theorem}[section]

\newtheorem{lemma}[theorem]{Lemma}
\newtheorem{coro}[theorem]{Corollary}
\newtheorem{prop-def}{Proposition-Definition}[section]

\textheight 22.5cm \textwidth 16cm \topmargin -0.8cm
\setlength{\oddsidemargin}{0.0cm}
\setlength{\evensidemargin}{0.0cm}

\begin{document}

\title{ Constructing  $3$-Lie algebras}

\author{Ruipu  Bai}
\address{College of Mathematics and Computer Science,
Hebei University, Baoding 071002, P.R. China}
\email{bairuipu@hbu.cn}

\author{Yong Wu }
\address{College of Mathematics and Computer Science,
Hebei University, Baoding 071002, P.R. China}
\email{wuyg1022@sina.com}

\date{}

\subjclass[2010]{17A30, 17B60}

\keywords{ $3$-Lie algebra,  Lie algebra, associative commutative
algebra, involution, derivation}

\maketitle

\begin{abstract} 3-Lie algebras  are constructed  by Lie algebras, derivations and linear
functions, associative commutative  algebras, whose involutions and
derivations. Then the $3$-Lie algebras are obtained from group
algebras $F[G]$. An infinite dimensional simple $3$-Lie algebra $(A,
[ , , ]_{\omega, \delta_0})$ and a non-simple $3$-Lie algebra  $(A,
[ , , ]_{\omega_1, \delta})$ are constructed by Laurent polynomials
$A=F[t, t^{-1}]$ and its involutions $\omega$ and  $\omega_1$ and
derivations $\delta$ and $\delta_0$. At last of the paper, we
summarize the methods of constructing $n$-Lie algebras for $n\geq 3$
and provide a problem.
\end{abstract}

\baselineskip=18pt

\section{Introduction}
\setcounter{equation}{0}
\renewcommand{\theequation}
{1.\arabic{equation}}

To construct the generalized Hamiltonian dynamics, Nambu \cite{N, T}
first proposed the notion of 3-bracket. Thus Nambu dynamics is
described by the phase flow given by Nambu-Hamilton equations of
motion which involves two Hamiltonians. The notion of $n$-Lie
algebra was introduced by Filippov in 1985 (\cite{F}). It is a
natural generalization of the concept of a Lie algebra to the case
where the fundamental multiplication is n-ary, $n\geq 2$ (when $n =
2$ the definition agrees with the usual definition of a Lie
algebra). $3$-Lie algebras have close relationships with many
important fields in mathematics and mathematical physics (cf.
\cite{N, T, BL,HHM,HCK,G,P}). For example, the metric $3$-Lie
algebras are used to describe a world volume of multiple $M2$-branes
 \cite{BL, G} (BLG).

 Since the multiple multiplication, the
structure of $n$-Lie algebras is more complicated than that of Lie
algebras. Especially, the realizations of $n$-Lie algebras is
hardness.

Filippov in \cite{F2} constructed $n$-Lie algebra structure  on
commutative associative algebras by commuting derivations
 $\{D_1, \cdots, D_n\}$.  Let $A$ be a commutative associative
 algebra, then $A$ is an $n$-Lie algebra with the multiplication:
\begin{equation}[x_{1}, \cdots,
x_n]=\det \left(
                                        \begin{array}{ccc}
                                          D_1(x_1) &\cdots &D_1(x_n)\\
                                          \cdots &\cdots &\cdots\\
                                          D_n(x_1) &\cdots &D_n(x_n)\\
                                        \end{array}
                                      \right),\end{equation} which is called a
{\it Jcobian algebra} defined by $\{D_1, \cdots, D_n\}$.

Filippov and Pozhidaev in papers \cite{F2, APP, APPS} defined {\it
monomial $n$-Lie algebra} $A_G(f, t)$ with the multiplication
\begin{equation}[e_{a_1}, \cdots, e_{a_n}]=f(a_1, \cdots,
a_n)e_{a_1+\cdots+a_n+t}, \end{equation} where $G$ is an Abelian
group, $A_G$ is a vector space with a basis $\{ e_g | g\in G\}$,
$t\in G$, $f: G^{n}\rightarrow F$.

 Dzhumadildaev in  \cite{D1} constructed an $(n+1)$-Lie algebra from  a strong
$n$-Lie-Poisson algebra  $(L, \cdot, \omega)$ by endowing  a
skew-symmetric $(n+1)$-multiplication
\begin{equation}\bar{\omega}= Id_L\wedge
\omega, \end{equation} where $Id_L$ is the identity map of $L$.

Bai and collaborators in \cite{BW2, BBW} constructed $n$-Lie
algebras by two-dimensional extensions of metric n-Lie algebras. And
obtain $(n+1)$-Lie algebras by  $n$-Lie algebras and linear
functions $f\in L^*$, which satisfies $f(L^1)=0.$

In this paper, we pay our main attention to construct 3-Lie
algebras. The paper is organized as follows. Section $2$ introduces
some basic notions of $n$-Lie algebras. Section $3$ provides a
construction of 3-Lie algebras from a commutative associative
algebras, involutions and derivations. Section $4$ constructs
$3$-Lie algebras from group algebras.  Section $5$ studies $3$-Lie
algebras constructed by
 Laurent polynomials.

\section{\textbf{Fundamental notions of $n$-Lie algebra}}
\setcounter{equation}{0}
\renewcommand{\theequation}
{2.\arabic{equation}}

An $n$-Lie algebra $L$ (cf.\cite{ F}) is a vector space endowed with
an $n$-ary multilinear  skew-symmetric multiplication satisfying the
$n$-Jacobi identity: $\forall x_1, \cdots, x_n, y_2, \cdots, y_n\in
L$
\begin{equation}
  [[x_1, \cdots, x_n], y_2, \cdots, y_n]=\sum_{i=1}^n[x_1, \cdots, [ x_i, y_2, \cdots, y_n], \cdots,
  x_n].
  \end{equation}
The $n$-ary skew-symmetry of the
operation $[x_1, \cdots, x_n]$ means that
\begin{equation}
 [x_1, \cdots, x_n]=sgn(\sigma)[x_{\sigma (1)}, \cdots,
 x_{\sigma(n)}],\;\;\forall x_1, \cdots, x_n\in L,
 \end{equation}
for any permutation $\sigma\in S_n$.

A subspace $B$ of $L$ is called a {\it subalgebra} if
$[B,\dots,B]\subseteq B$. In particular, the subalgebra generated by
the vectors $[x_1, \cdots, x_n]$ for any $x_1, \cdots, x_n\in L$ is
called the {\it derived algebra} of $L$, which is denoted by $L^1$.
If $L^1=0$, then $L$ is called an  {\it abelian $n$-Lie algebra}.

{\it A  derivation} of an $n$-Lie algebra is a linear transformation
$D$ of $L$ into itself satisfying
\begin{equation}
    D([x_1, \cdots, x_n])=\sum_{i=1}^n[x_1, \cdots, D(x_i), \cdots, x_n], \label{E:2.4}
\end{equation}
for $x_1$, $\cdots$, $x_n \in L$. Let $Der L$ be the set of all
derivations of $L$. Then $Der L$ is a subalgebra of the general Lie
algebra $gl(L)$ and is called {\it the  derivation algebra} of $L$.

The map ad $(x_1, \cdots, x_{n-1}): L\rightarrow L$ given by
$$
    ad(x_1,\cdots, x_{n-1})(x_n) = [x_1, \cdots, x_{n-1}, x_{n}], \forall  x_n\in L,
$$

\vspace{2mm}\noindent is referred to as {\it a left multiplication}
defined by elements $x_1$, $\cdots$, $x_{n-1} \in L.$ It follows
from identity (2.1) that $ad(x_1, \cdots, x_{n-1})$ is a derivation.
The set of all finite linear combinations of left multiplications is
an ideal of $Der L$, which is denoted by $ad(L)$. Every derivation
in $ad(L)$ is by definition an inner derivation.

An {\it ideal} of an $n$-Lie algebra $L$ is a subspace $I$ such that
$ [I, L, \cdots, L]\subseteq I.$  If $L^1\neq 0$ and $L$ has no
ideals except for $0$ and itself, then $L$ is called a {\it simple
$n$-Lie algebra}.

\begin{flushleft}
\section{$3$-Lie algebras constructed by commutative associative algebras}
\end{flushleft}
\setcounter{equation}{0}
\renewcommand{\theequation}
{3.\arabic{equation}}

 R.Bai and collaborators (cf \cite{BBW}) constructed
$3$-Lie algebra $(L, [ , , ]_{\alpha})$ by a Lie algebra $(L, [ ,
])$ and linear function $\alpha\in L^*$, where $\alpha([L, L])=0$,
and $[ , , ]_{\alpha}$ defined as follows
 $$[a,b,c]_{\alpha}=\alpha(c)[a,b]+\alpha(a)[b,c]+
\alpha(b)[c,a], ~ for ~ a, b, c\in L.$$

 And  it is proved   that  the
$3$-Lie algebra $(L, [ , , ]_{\alpha})$ is a two step $3$-solvable
$3$-Lie algebra, that is, $L^{(2)}=[L^1, L^1, L^1]_{\alpha}=0$,
where $L^1=[L, L, L]_{\alpha}$.

In this section, we try to construct $3$-Lie algebras by commutative
associative algebras and their derivations, involutions and  linear
functions.

Let $A$ be a a commutative associative algebra over a field $F$. A
derivation $\Delta$ of $A$ is a linear mapping of $A$ satisfying
$\Delta(xy)=\Delta(x)y+x\Delta(y)$ for every $x, y\in A$. If a
linear mapping  $\omega: A\rightarrow A$ satisfying for every $a,
b\in A$, $\omega(ab)=\omega(a)\omega(b)$ and $\omega^2(a)=a$, then
$\omega$ is called {\it an involution} of $A$. If  $ch F\neq 2$,
then for every $c\in A$,

\vspace{2mm}
 $\omega(c+\omega(c)=c+\omega(c)$, ~~$\omega(c-\omega(c))=-(c-\omega(c))$, ~~ and $c=\frac{1}{2}(c+\omega(c))+\frac{1}{2}(c-\omega(c))$.

 \vspace{2mm}\noindent Therefore, $$A=A_1\dot+ A_{-1}, ~\mbox{where} ~A_1=\{ a|~ a\in A, \omega(a)=a~ \},~
A_{-1}=\{ b|~ b\in A, \omega(b)=-b~ \}.$$

 First we
give the following result.

\begin{lemma}
 Let $A$ be a  commutative associative
algebra, $\omega$ be an involution of $A$ and $\Delta\in Der A$.
Then $(A, [, ]_{\Delta})$ and  $(A, [, ]_{\omega})$  are Lie
algebras, where
\begin{equation}
[a, b]_{\Delta}=(Id_A\wedge \Delta)(a, b)= a\Delta(b)-b\Delta(a)~ a,
b\in A;
\end{equation}
\begin{equation}
[a, b]_{\omega}=(\omega\wedge Id_A)(a, b)=\omega(a)b-\omega(b)a, ~
a, b\in A.
\end{equation}

  If $\Delta$ and $\omega$ satisfy $\Delta \omega+\omega\Delta=0$,
then $(A, [, ]_{\omega, \Delta})$ is a Lie algebra, where
\begin{equation}
[a, b]_{\omega,\Delta}=((Id_A-\omega)\wedge \Delta)(a, b)=
(a-\omega(a))\Delta(b)-(b-\omega(b))\Delta(a), ~ a, b\in A.
\end{equation}

\end{lemma}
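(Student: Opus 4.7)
The plan is to verify skew-symmetry and the Jacobi identity for each of the three brackets; skew-symmetry is immediate, since every candidate has the form $f(a)g(b)-f(b)g(a)$ for certain linear maps $f,g\colon A\to A$. For (3.1), I would expand the iterated bracket
\[
[[a,b]_\Delta,c]_\Delta = \bigl(a\Delta(b)-b\Delta(a)\bigr)\Delta(c) - c\,\Delta\bigl(a\Delta(b)-b\Delta(a)\bigr)
\]
using the Leibniz rule for $\Delta$. The resulting monomials come in two shapes: triple products of first derivatives such as $a\Delta(b)\Delta(c)$, and products with one second derivative such as $ca\Delta^2(b)$. Summing cyclically over $(a,b,c)$ and invoking the commutativity of $A$, each monomial is cancelled by a sign-opposite partner from a different cyclic term, and Jacobi follows.

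For (3.2) the same strategy works with $\omega$ in place of $\Delta$: since $\omega$ is a homomorphism with $\omega^2=\mathrm{Id}$, one has $\omega([a,b]_\omega)=a\omega(b)-b\omega(a)$, and expanding $[[a,b]_\omega,c]_\omega$ once again yields monomials of two shapes ($ac\omega(b)$ and $a\omega(b)\omega(c)$) that cancel cyclically using the commutativity of $A$.

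For (3.3) the new ingredient is the anticommutation $\omega\Delta+\Delta\omega=0$. Setting $\tilde x := x-\omega(x)$ and $P:=[a,b]_{\omega,\Delta}=\tilde a\Delta(b)-\tilde b\Delta(a)$, we have $\omega(\tilde a)=-\tilde a$ and $\omega(\Delta(b))=-\Delta(\omega(b))$, whence
\[
\omega(P) = \tilde a\,\Delta(\omega(b)) - \tilde b\,\Delta(\omega(a)),
\qquad
P-\omega(P) = \tilde a\,\Delta(\tilde b)-\tilde b\,\Delta(\tilde a).
\]
Substituting this together with the Leibniz expansion of $\Delta(P)$ into
\[
[P,c]_{\omega,\Delta} = \bigl(P-\omega(P)\bigr)\Delta(c) - \tilde c\,\Delta(P),
\]
and cycling over $(a,b,c)$, each of the six resulting monomial types pairs with a sign-opposite partner from another cyclic permutation and cancels.

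The main obstacle is the bookkeeping in case (3.3), where many monomials appear involving $\tilde a,\tilde b,\tilde c$ alongside both first and second derivatives of $a,b,c$. The crucial simplification provided by $\omega\Delta=-\Delta\omega$ is that it lets one absorb every $\omega$ into the tilde notation uniformly; once that is done, the cyclic cancellation has exactly the same structure as in case (3.1).
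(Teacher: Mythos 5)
Your proposal is correct and follows the same route as the paper, which simply asserts that (3.1) is a known fact and that (3.2) and (3.3) hold ``by direct computation'': you carry out exactly that direct verification of skew-symmetry and the (cyclic) Jacobi identity, and your key observation for (3.3) --- that $\omega\Delta=-\Delta\omega$ yields $P-\omega(P)=\tilde a\Delta(\tilde b)-\tilde b\Delta(\tilde a)$ so all terms can be written in the tilde variables and cancelled cyclically --- is exactly the simplification that makes the computation go through. Your write-up is in fact more informative than the paper's one-line proof.
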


\begin{proof}  Eq. (3.1) is well know result. By the direct
computation,  Eqs. (3.2) and (3.3) satisfy  Eq (2.1), respectively.
\end{proof}

\begin{theorem}
 Let $A$ be a  commutative associative
algebra, $\Delta\in Der(A)$ and $\omega$ be an involution of $A$. If
$\alpha$, $\beta, \gamma\in A^*$ satisfy
\begin{equation}\alpha(ab)=0, ~~ \beta(a\Delta(b)-b\Delta(a))=0,
\gamma(a\Delta(b)-b\Delta(a))=\gamma(\omega(a)\Delta(b)-\omega(b)\Delta(a)),\end{equation}
then  $(A, [ , , ]_{\alpha,f})$ and $(A, [ , , ]_{\beta, \Delta})$
are $3$-Lie algebras, where  for arbitrary $a, b, c\in A,$
\begin{equation}
[a, b, c]_{\alpha, \omega}=(\alpha\wedge Id_A\wedge \omega)(a, b,
c)=\alpha(a)[b, c]_{\omega}+\alpha(b)[c, a]_{\omega}+\alpha(c)[a,
b]_{\omega},\end{equation}
\begin{equation} [a, b, c]_{\beta, \Delta}=(\beta\wedge
Id_A\wedge \Delta)(a, b, c)=\beta(a)[b, c]_{\Delta}+\beta(b)[c,
a]_{\Delta}+\beta(c)[a, b]_{\Delta}.\end{equation}

If $\Delta$ and $\omega$ satisfy $\Delta \omega+\omega\Delta=0$,
then $(A, [, ]_{\gamma, \omega, \Delta})$ is a $3$-Lie algebra,
where
\begin{equation}
[a, b, c]_{\gamma, \omega,\Delta}=(\gamma\wedge (Id_A-\omega)\wedge
\Delta)(a, b,
c)=\gamma(a)((b-\omega(b))\Delta(c)-(c-\omega(c))\Delta(b))\end{equation}
$$\hspace{8mm}+\gamma(b)((c-\omega(c))\Delta(a)-(a-\omega(a))\Delta(c))+\gamma(c)((a-\omega(a))\Delta(b)-(b-\omega(b))\Delta(a)).$$
\end{theorem}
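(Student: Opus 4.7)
The plan is to reduce each of the three claims to the Bai--et al.\ construction recalled at the start of Section 3: from a Lie algebra $(L,[\,,])$ and a linear functional $\varphi\in L^{*}$ vanishing on $[L,L]$, one obtains a $3$-Lie algebra with bracket
\[
[a,b,c]_{\varphi}=\varphi(a)[b,c]+\varphi(b)[c,a]+\varphi(c)[a,b].
\]
Lemma 3.1 supplies the three Lie algebra structures $(A,[\,,]_{\omega})$, $(A,[\,,]_{\Delta})$ and (under $\Delta\omega+\omega\Delta=0$) $(A,[\,,]_{\omega,\Delta})$. So the whole theorem will follow once I check, in each case, that the stated hypothesis on the functional is exactly the vanishing condition on the corresponding derived algebra.

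First I would treat $(A,[\,,]_{\alpha,\omega})$. The derived algebra of $(A,[\,,]_{\omega})$ is spanned by elements of the form $\omega(a)b-\omega(b)a$, each a difference of products in $A$. Since the hypothesis reads $\alpha(ab)=0$ for all $a,b\in A$, we get $\alpha(\omega(a)b)=\alpha(\omega(b)a)=0$, hence $\alpha$ kills $[A,A]_{\omega}$, and the Bai--et al.\ recipe produces the $3$-Lie bracket (3.5). Next, for $(A,[\,,]_{\beta,\Delta})$, the derived algebra of $(A,[\,,]_{\Delta})$ is spanned by $a\Delta(b)-b\Delta(a)$, so the hypothesis $\beta(a\Delta(b)-b\Delta(a))=0$ is literally the vanishing condition; (3.6) follows immediately.

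For the third bracket, note that
\[
[a,b]_{\omega,\Delta}=\bigl(a\Delta(b)-b\Delta(a)\bigr)-\bigl(\omega(a)\Delta(b)-\omega(b)\Delta(a)\bigr),
\]
so the derived algebra of $(A,[\,,]_{\omega,\Delta})$ is spanned by these differences. The hypothesis $\gamma(a\Delta(b)-b\Delta(a))=\gamma(\omega(a)\Delta(b)-\omega(b)\Delta(a))$ is precisely the statement that $\gamma$ annihilates this difference; thus, provided $\Delta\omega+\omega\Delta=0$ so that Lemma 3.1 applies to give the Lie algebra $(A,[\,,]_{\omega,\Delta})$, the Bai--et al.\ construction yields the $3$-Lie algebra of (3.7) by substitution of $[\,,]_{\omega,\Delta}$ into the formula $\gamma(a)[b,c]+\gamma(b)[c,a]+\gamma(c)[a,b]$.

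The main (and essentially only) obstacle is the bookkeeping of checking that the three hypotheses in (3.4) match up correctly with the three derived algebras; no further Jacobi-type identity needs to be verified directly, because the $3$-Jacobi identity (2.1) in each case is inherited automatically from the underlying Lie algebra via the Bai--et al.\ theorem. Thus the proof is effectively a three-line translation, with the Lie algebra part of the work already absorbed into Lemma 3.1.
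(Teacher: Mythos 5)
Your proposal is correct and follows the paper's strategy exactly: obtain the three Lie algebras from Lemma 3.1, check that each functional annihilates the corresponding derived algebra, and invoke Theorem 3.1 of \cite{BBW}. The only point of divergence is the verification for $\alpha$: the paper decomposes $A=A_1\dot+ A_{-1}$ into $\omega$-eigenspaces and computes $\alpha(\omega(a)b-\omega(b)a)$ case by case (obtaining $0$, $2\alpha(ab)$ or $-2\alpha(ab)$ according to which eigenspaces $a$ and $b$ lie in), whereas you simply observe that $\omega(a)b$ and $\omega(b)a$ are themselves products of elements of $A$, so the hypothesis $\alpha(ab)=0$ for all $a,b$ kills them outright. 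Your verification is shorter and slightly more general, since the paper's eigenspace decomposition implicitly requires $\mathrm{ch}\,F\neq 2$ while your observation does not. You also write out the $\beta$- and $\gamma$-checks that the paper leaves implicit; in particular, rewriting $[a,b]_{\omega,\Delta}=(a\Delta(b)-b\Delta(a))-(\omega(a)\Delta(b)-\omega(b)\Delta(a))$ makes it transparent that the third condition in (3.4) is exactly the statement $\gamma([A,A]_{\omega,\Delta})=0$, which is what the Bai--Bai--Wang construction needs.
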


{\begin{proof} Let  $\alpha\in A^*$. Then  for every $a, b\in A$,
\begin{center} $\alpha(\omega(a)b-\omega(b)a)=
\begin{cases}
0,&  \text{if} ~ a, b\in A_1;  ~~ \text{or } ~ a, b\in A_{-1},\\
2\alpha(ab),&  \text{if} ~ a \in A_1, b\in A_{-1},\\
-2\alpha(ab),&  \text{if} ~ b \in A_1, a\in A_{-1}.
\end{cases}$
\end{center} Therefore, if $\alpha(ab)=0$, then $\alpha(\omega(a)b-\omega(b)a)=0$.
From Lemma 3.1, Eq. (3.1),  and Theorem 3.1 in \cite{BBW}, the
result holds. \end{proof}

\begin{theorem} Let $A$ be a commutative associative algebra  over a field $F$, $\Delta$ be a derivation of $A$ and
$\omega: A\rightarrow A$ be an involution of $A$ satisfying
$$f\Delta+\Delta \omega=0.$$ Then $A$ is a $3$-Lie algebra in the
multiplication $[ , , ]_{\omega, \Delta}: A\otimes A\otimes
A\rightarrow A$, $ ~ \forall ~ a,b,c\in A,$

\begin{equation}
\hspace{-2cm}[a, b, c]_{\omega, \Delta}=\omega\wedge
Id_A\wedge\Delta(a, b, c)=\begin{vmatrix}
\omega(a) & \omega(b) &\omega(c) \\
a & b & c  \\
\Delta(a) & \Delta(b) & \Delta(c) \\
\end{vmatrix}
\end{equation}

\vspace{2mm}\hspace{3.6cm}$=\omega(a)(b\Delta(c)-c\Delta(b))+\omega(b)(c\Delta(a)-a\Delta(c))+
\omega(c)(a\Delta(b)-b\Delta(a)).$

\end{theorem}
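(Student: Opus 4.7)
The plan is to check the two defining axioms of a 3-Lie algebra for the operation $[\cdot,\cdot,\cdot]_{\omega,\Delta}$. Skew-symmetry in all three arguments is immediate from the determinantal presentation (3.8): transposing two of $a,b,c$ interchanges two columns of the $3\times 3$ matrix, which negates the determinant. The real content is the fundamental identity (2.1) with $n=3$, which I would verify by direct expansion, exploiting the three hypotheses: $\omega$ is a ring involution (so $\omega(uv)=\omega(u)\omega(v)$ and $\omega^{2}=Id$), $\Delta$ is a derivation of the associative product, and they anticommute, $\omega\Delta+\Delta\omega=0$.

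A convenient preliminary step is to rewrite
$$[a,b,c]_{\omega,\Delta}=\omega(a)[b,c]_{\Delta}+\omega(b)[c,a]_{\Delta}+\omega(c)[a,b]_{\Delta},$$
where $[u,v]_{\Delta}=u\Delta(v)-v\Delta(u)$ is the Lie bracket of Lemma 3.1 eq.~(3.1). Substituting this form into the LHS $[[a,b,c]_{\omega,\Delta},x,y]_{\omega,\Delta}$ and into each of the three summands on the RHS of (2.1), I would expand using three normalization moves: distribute $\omega$ across products by multiplicativity, distribute $\Delta$ across products by the Leibniz rule, and convert every composition $\Delta\omega$ into $-\omega\Delta$ via the anticommutation. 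After these normalizations, every monomial on either side is a product of five factors drawn from $a,b,c,x,y$, each carrying at most one $\omega$ and at most one $\Delta$, with signs $\pm 1$ dictated by the determinantal structure.

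Matching then proceeds monomial by monomial, grouped by which generators carry $\omega$ and which carry $\Delta$. A helpful internal consistency check is that all monomials with two $\Delta$'s on a single generator must cancel, since in the determinantal picture such terms would correspond to a $3\times 3$ matrix with two identical rows. The main obstacle is the sheer combinatorial bookkeeping: the LHS produces on the order of eighteen monomials after full expansion, and the RHS roughly three times as many, so the matching requires careful sign tracking. The anticommutation hypothesis is essential, as it is precisely what allows the mixed terms involving $\Delta\omega$ to combine with their $-\omega\Delta$ counterparts; without it the fundamental identity fails.
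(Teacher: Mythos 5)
Your route is fundamentally the same as the paper's: dismiss skew-symmetry via the determinant, then verify the fundamental identity (2.1) directly from the three hypotheses (multiplicativity of $\omega$ with $\omega^2=Id_A$, the Leibniz rule for $\Delta$, and $\omega\Delta+\Delta\omega=0$). The difference is the level at which the bookkeeping is done. The paper never descends to monomials: it first establishes the closed determinant formulas
$$\omega([a,b,c]_{\omega,\Delta})=\begin{vmatrix}\omega(a)&\omega(b)&\omega(c)\\ a&b&c\\ \Delta\omega(a)&\Delta\omega(b)&\Delta\omega(c)\end{vmatrix},\qquad
\Delta([a,b,c]_{\omega,\Delta})=\begin{vmatrix}\Delta\omega(a)&\Delta\omega(b)&\Delta\omega(c)\\ a&b&c\\ \Delta(a)&\Delta(b)&\Delta(c)\end{vmatrix}+\begin{vmatrix}\omega(a)&\omega(b)&\omega(c)\\ a&b&c\\ \Delta^2(a)&\Delta^2(b)&\Delta^2(c)\end{vmatrix}$$
(the first by multiplicativity plus anticommutation, the second by Leibniz plus the vanishing of the summand with two equal $\Delta$-rows), and then expands $[[a,b,c]_{\omega,\Delta},d,e]_{\omega,\Delta}$ by cofactors along the column containing the inner bracket. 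The fundamental identity thereby reduces to matching four block terms, each a $3\times3$ determinant times a $2\times2$ determinant, against the cyclic sum $\circlearrowleft_{a,b,c}$ of the other side. Your monomial-by-monomial plan would in principle reach the same conclusion, but it is much heavier than you estimate: the left side alone produces on the order of sixty five-factor monomials before cancellation (not eighteen), and the right side roughly three times that, so the determinant-level organization is precisely what makes the verification humanly checkable.

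There is also one concrete error in your plan that would derail the matching if followed literally: monomials carrying $\Delta^2$ on a single generator do \emph{not} cancel. Such terms come from the determinant with rows $\omega(\cdot)$, $(\cdot)$, $\Delta^2(\cdot)$, which has no repeated rows; they survive on \emph{both} sides of the fundamental identity and must be matched against each other (in the paper's proof they are exactly the block $\circlearrowleft_{a,b,c}\Delta^2(a)\begin{vmatrix}\omega(b)&\omega(c)\\ b&c\end{vmatrix}\begin{vmatrix}\omega(d)&\omega(e)\\ d&e\end{vmatrix}$). What the repeated-row argument kills are the Leibniz terms in which $\Delta$ lands on the plain middle row, i.e.\ monomials of the form $\omega(x_i)\Delta(x_j)\Delta(x_k)$ with a single $\Delta$ on each of two distinct generators. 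If you discard the $\Delta^2$ terms expecting them to cancel, your two sides will not agree.
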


\begin{proof} It is clear that $[, ,]_{\omega,\Delta}$ is a
$3$-ary linear skew-symmetric multiplication on $A$. Now we prove
that $[, , ]_{\omega,\Delta}$ satisfies Eq.(2.1). Since $A$ is
commutative and $\omega$ is an involution of $A$ which satisfies
$\omega\Delta+\Delta \omega=0$, by Eq.(3.8),  $\forall a,b,c,d,e \in
A$

\vspace{2mm}$\omega([a, b, c]_{\omega, \Delta})=
\begin{vmatrix}
\omega(a) & \omega(b) &\omega(c) \\
a & b & c  \\
\Delta \omega(a) & \Delta \omega(b) & \Delta \omega(c) \\
\end{vmatrix},$

\vspace{4mm}$\Delta([a, b, c]_{\omega, \Delta})=
\begin{vmatrix}
\Delta \omega(a) & \Delta \omega(b) & \Delta \omega(c) \\
a & b & c  \\
\Delta (a) & \Delta (b) & \Delta (c) \\
\end{vmatrix}
+\begin{vmatrix}
\omega(a) & \omega(b) &\omega(c) \\
a & b & c  \\
\Delta^2 (a) & \Delta^2 (b) & \Delta^2 (c) \\
\end{vmatrix},$

\vspace{4mm}$[[a, b, c]_{\omega,\Delta}, d, e]_{\omega,\Delta}=
\begin{vmatrix}
\omega([a, b, c]_{\omega, \Delta}) & \omega(d) &\omega(e) \\
[a, b, c]_{\omega, \Delta}    & d & e  \\
\Delta([a, b, c]_{\omega, \Delta}) & \Delta (d) & \Delta (e) \\
\end{vmatrix}$

\vspace{4mm} $=\begin{vmatrix}
\omega(a) & \omega(b) &\omega(c) \\
a & b & c  \\
\Delta \omega(a) & \Delta \omega(b) & \Delta \omega(c) \\
\end{vmatrix}
\begin{vmatrix}
d &e \\
\Delta (b) & \Delta (c) \\
\end{vmatrix}
-\begin{vmatrix}
\omega(a) & \omega(b) &\omega(c) \\
a & b & c  \\
\Delta(a) & \Delta(b) & \Delta(c) \\
\end{vmatrix}
\begin{vmatrix}
\omega(d) & \omega(e) \\
\Delta (d) & \Delta (e) \\
\end{vmatrix}$

\vspace{4mm} $+\begin{vmatrix}
\Delta \omega(a) & \Delta \omega(b) & \Delta \omega(c) \\
a & b & c  \\
\Delta (a) & \Delta (b) & \Delta (c) \\
\end{vmatrix}
\begin{vmatrix}
\omega(d) & \omega(e) \\
d & e \\
\end{vmatrix}
+\begin{vmatrix}
\omega(a) & \omega(b) &\omega(c) \\
a & b & c  \\
\Delta^2 (a) & \Delta^2 (b) & \Delta^2 (c) \\
\end{vmatrix}
\begin{vmatrix}
\omega(d) & \omega(e) \\
d & e \\
\end{vmatrix}.$

\vspace{2mm}\noindent Then we have

\vspace{2mm}$[[a,d,e]_{\omega,\Delta},b,c]_{\omega,\Delta}+[[b,d,e]_{\omega,\Delta},c,a]_{\omega,\Delta}+[[c,d,e]_{\omega,\Delta},a,b]_{\omega,\Delta}$

\vspace{4mm} $=\circlearrowleft_{a,b,c}\Delta
\omega(a)\begin{vmatrix}
b & c \\
\Delta (b) & \Delta (c) \\
\end{vmatrix}
\begin{vmatrix}
\omega(d) & \omega(e) \\
d & e \\
\end{vmatrix}
+\circlearrowleft_{a,b,c}a\begin{vmatrix}
\omega(b) & \omega(c) \\
\Delta (b) & \Delta (c) \\
\end{vmatrix}
\begin{vmatrix}
\omega(d) & \omega(e) \\
\Delta (d) & \Delta (e) \\
\end{vmatrix}$

\vspace{4mm} $+\circlearrowleft_{a,b,c}\Delta
\omega(a)\begin{vmatrix}
\omega(b) & \omega(c) \\
b & c \\
\end{vmatrix}
\begin{vmatrix}
d & e \\
\Delta (d) & \Delta (e) \\
\end{vmatrix}
+\circlearrowleft_{a,b,c}\Delta^2 (a)\begin{vmatrix}
\omega(b) & \omega(c) \\
b & c \\
\end{vmatrix}
\begin{vmatrix}
\omega(d) & \omega(e) \\
d & e \\
\end{vmatrix}$

\vspace{4mm}$ =[[a,b,c]_{\omega,\Delta},d,e]_{\omega,\Delta},$
\\where $\circlearrowleft_{a,b,c}$ is the circulation of $a, b, c$,
for example

\vspace{2mm}\noindent$\circlearrowleft_{a,b,c}\Delta
\omega(a)\begin{vmatrix}
b & c \\
\Delta (b) & \Delta (c) \\
\end{vmatrix}=\Delta
\omega(a)\begin{vmatrix}
b & c \\
\Delta (b) & \Delta (c) \\
\end{vmatrix}+\Delta
\omega(b)\begin{vmatrix}
c & a \\
\Delta (c) & \Delta (a) \\
\end{vmatrix}+\Delta
\omega(c)\begin{vmatrix}
a & b \\
\Delta (a) & \Delta (b) \\
\end{vmatrix}.$

\vspace{2mm}Therefore, $(A, [ , , ]_{\omega,\Delta})$ is a $3$-Lie
algebra in the multiplication (3.8). \end{proof}

\begin{coro}  Let $A$ be a
commutative associative algebra, $\Delta\in Der(A)$ and $\omega$ be
an involution of $A$ satisfying $\Delta\omega+\omega\Delta=0$. Then
 $3$-Lie algebra  $(A, [ , , ]_{ \Delta, \omega})$
 with the multiplication (3.8) can be decomposed into the direct sum of abelian subalgebras $A_{1}$ and $A_{-1}$, that
 is, $A=A_1\dot+ A_{-1},$ and
$$ [A_1, A_1,
A_1]_{\omega,\Delta}=[A_{-1}, A_{-1}, A_{-1}]_{\omega,\Delta}=0. $$
 And $\Delta(A_1)\subseteq A_{-1}, ~ \Delta (A_{-1})\subseteq
A_1.$

\end{coro}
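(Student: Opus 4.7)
The plan is to verify each of the three claims by a short direct computation, leveraging the decomposition $A = A_1 \dot{+} A_{-1}$ that was already established in the preamble (using $\mathrm{ch}\, F \neq 2$ and $\omega^2 = \mathrm{Id}$, with $A_1 \cap A_{-1} = 0$ because $2a = 0$ forces $a = 0$).

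First I would check $\Delta(A_1) \subseteq A_{-1}$ and $\Delta(A_{-1}) \subseteq A_1$. For $a \in A_1$ we have $\omega(a) = a$, and the anticommutation relation $\omega\Delta + \Delta\omega = 0$ gives
\begin{equation*}
\omega(\Delta(a)) = -\Delta(\omega(a)) = -\Delta(a),
\end{equation*}
so $\Delta(a) \in A_{-1}$. The argument for $\Delta(A_{-1}) \subseteq A_1$ is identical up to a sign.

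Next I would verify that $[A_1, A_1, A_1]_{\omega,\Delta} = 0$. Substituting $\omega(a) = a$, $\omega(b) = b$, $\omega(c) = c$ into formula (3.8) and using that $A$ is commutative associative, the expression becomes
\begin{equation*}
a(b\Delta(c) - c\Delta(b)) + b(c\Delta(a) - a\Delta(c)) + c(a\Delta(b) - b\Delta(a)),
\end{equation*}
whose six terms cancel pairwise. The same cancellation works on $A_{-1}$: when $\omega(a) = -a$, $\omega(b) = -b$, $\omega(c) = -c$, the bracket equals $-1$ times the identical telescoping sum, hence again vanishes. In particular $A_1$ and $A_{-1}$ are closed under $[\,,\,,\,]_{\omega,\Delta}$ (trivially, since the brackets are zero), so they are abelian subalgebras.

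There is no real obstacle here; the content of the corollary is essentially that the involution $\omega$ induces an eigenspace decomposition compatible with $\Delta$ (which interchanges the eigenspaces thanks to the anticommutation), and the determinantal shape of the bracket (3.8) — with one row being $\omega$, one being the identity, and one being $\Delta$ — forces any bracket of three elements in a single eigenspace of $\omega$ to reduce to a vanishing $2 \times 2$-like expansion in the commutative algebra $A$. The only point worth highlighting in the write-up is that $A_1$ and $A_{-1}$ need not be invariant under $\Delta$ (indeed they are swapped), which is what makes the decomposition interesting even though the individual eigenspaces are abelian as 3-Lie subalgebras.
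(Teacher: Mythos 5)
Your proof is correct and follows essentially the same route as the paper: substitute the eigenvalue conditions $\omega(a)=\pm a$ into (3.8) and use the anticommutation relation $\omega\Delta+\Delta\omega=0$ to get $\Delta(A_{\pm 1})\subseteq A_{\mp 1}$. The only cosmetic difference is that the paper phrases the vanishing of $[A_1,A_1,A_1]_{\omega,\Delta}$ as a determinant with two equal rows (treating only the $A_1$ case explicitly), whereas you expand the same determinant and cancel the six terms pairwise, and you also write out the $A_{-1}$ case, which the paper leaves to symmetry.
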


\begin{proof} For every $a_i\in A_1,$
 $\omega(a_i)=a_i$ for $i=1, 2, 3$, $\omega\Delta(a_1)=-\Delta \omega(a_1)=-\Delta(a_1)$, then

\vspace{2mm}
  $  [a_1, a_2,
    a_3]_{\omega, \Delta}=\begin{vmatrix}
\omega(a_1)& \omega(a_2)&
    \omega(a_3) \\
a_1& a_2&
    a_3  \\
\Delta(a_1)& \Delta(a_2)&
   \Delta(a_3) \\
\end{vmatrix}=\begin{vmatrix}
a_1& a_2&
    a_3 \\
a_1& a_2&
    a_3  \\
\Delta(a_1)& \Delta(a_2)&
   \Delta(a_3) \\
\end{vmatrix}=0$.

\vspace{2mm}\noindent Thus $A_1$ is an abelian subalgebra, and
$\Delta(A_1)\subseteq A_{-1}$. \end{proof}

\begin{theorem} Let  $(A, [ , , ]_{\omega, \Delta})$ be the $3$-Lie algebra with the multiplication (3.8). If  $I$ is
an ideal of the  associative algebra $A$ which satisfies
$\omega(I)\subseteq I$ and $\Delta(I)\subseteq I$. Then $I$ is an
ideal of the $3$-Lie algebra $(A, ~ [ , , ]_{\omega, \Delta})$.
\end{theorem}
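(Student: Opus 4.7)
The plan is to reduce the Lie-ideal condition to the associative-ideal condition by expanding the bracket and checking that every summand lies in $I$. By the skew-symmetry of $[\,,\,,\,]_{\omega,\Delta}$, it suffices to show that $[x,a,b]_{\omega,\Delta}\in I$ for every $x\in I$ and every $a,b\in A$.

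Using formula (3.8), I would write
\begin{equation*}
[x,a,b]_{\omega,\Delta}=\omega(x)\bigl(a\Delta(b)-b\Delta(a)\bigr)+\omega(a)\bigl(b\Delta(x)-x\Delta(b)\bigr)+\omega(b)\bigl(x\Delta(a)-a\Delta(x)\bigr),
\end{equation*}
and examine each of the three summands in turn. For the first, the hypothesis $\omega(I)\subseteq I$ gives $\omega(x)\in I$, so the product $\omega(x)(a\Delta(b)-b\Delta(a))$ lies in $I$ because $I$ is an associative ideal of $A$. For the second and third, the hypothesis $\Delta(I)\subseteq I$ together with $x\in I$ shows that $b\Delta(x)-x\Delta(b)$ and $x\Delta(a)-a\Delta(x)$ already belong to $I$, so multiplying by $\omega(a)$ or $\omega(b)\in A$ keeps us inside the associative ideal $I$. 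Summing the three terms yields $[x,a,b]_{\omega,\Delta}\in I$, which is precisely what is needed.

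There is essentially no obstacle here: the argument is a direct unpacking of the determinantal multiplication, and the two stability conditions $\omega(I)\subseteq I$ and $\Delta(I)\subseteq I$ are exactly what is required to push the ideal property of $A$ through the three columns of the determinant in (3.8). The only mild point to keep in mind is that skew-symmetry of the bracket is used implicitly to reduce the general case $[a_1,a_2,a_3]$ with some $a_i\in I$ to the case where $I$-elements appear in the first slot; this is immediate from (2.2).
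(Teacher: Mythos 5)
Your proposal is correct and follows essentially the same route as the paper: expand $[x,a,b]_{\omega,\Delta}$ via (3.8) with the ideal element in the first slot (skew-symmetry handling the other slots) and check that each of the three summands lies in $I$ using $\omega(I)\subseteq I$, $\Delta(I)\subseteq I$, and the associative ideal property. No gap; your write-up is if anything slightly more careful than the paper's one-line verification.
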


\begin{proof}  Let $I$ be an ideal of the associative
algebra $A$ satisfying $\omega(I)\subseteq I$ and
$\Delta(I)\subseteq I$. Then for every  $a\in I$,   and   $\forall
b, c\in A$, by Eq. (3.8)
 $$
[a, b, c]_{\omega,\Delta}=\omega(a)(b\Delta(c)-c\Delta(b))+
\omega(b)(a\Delta(c)-c\Delta(a))+\omega(c)(a\Delta(b)-b\Delta(a))\in
I.$$

 \vspace{2mm}\noindent Therefore, $I$ is an ideal of the $3$-Lie algebra $(A, ~ [ , , ]_{\omega,\Delta})$.
\end{proof}

\begin{theorem} Let $A_1$ and $A_2$ be commutative associative algebras, $\omega_i$ be an involution of $A_i$ and $\Delta_i$ be a derivation of $A_i$ which satisfies $\omega_i\Delta_i+\Delta_i
\omega_i=0$ for $i=1, 2$. If $\sigma : A_1\rightarrow A_2$ is an
associative algebra isomorphism satisfying $\sigma
\omega_1=\omega_2\sigma,$ and $\sigma\Delta_1=\Delta_2\sigma.$ Then
$\sigma$ is a $3$-Lie algebra isomorphism from $(A_1, [ , ,
]_{\omega_1, \Delta_1}) $  onto $(A_2, [ , , ]_{\omega_2, \Delta_2})
$.

\end{theorem}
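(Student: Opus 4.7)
The plan is a direct verification that the map $\sigma$ intertwines the two ternary brackets defined by (3.8); bijectivity is already given since $\sigma$ is an associative algebra isomorphism, so only compatibility with $[\,,\,,\,]_{\omega,\Delta}$ must be established.

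First I would write out $\sigma([a,b,c]_{\omega_1,\Delta_1})$ using formula (3.8) for the source $3$-Lie algebra. Since $\sigma$ is linear it distributes over the sum of three terms, and since it is multiplicative it distributes over each product $\omega_1(a)(b\Delta_1(c)-c\Delta_1(b))$ appearing in (3.8). Thus every occurrence of $\omega_1(x)$ in the expansion gets replaced by $\sigma\omega_1(x)$ and every occurrence of $\Delta_1(x)$ by $\sigma\Delta_1(x)$, while every bare $x\in A_1$ becomes $\sigma(x)\in A_2$.

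Next I would invoke the two intertwining hypotheses $\sigma\omega_1=\omega_2\sigma$ and $\sigma\Delta_1=\Delta_2\sigma$ to rewrite $\sigma\omega_1(x)=\omega_2(\sigma(x))$ and $\sigma\Delta_1(x)=\Delta_2(\sigma(x))$. Substituting these identities back into the expanded expression yields exactly the right-hand side of formula (3.8) applied to $(\sigma(a),\sigma(b),\sigma(c))$ with the data $(\omega_2,\Delta_2)$, that is $[\sigma(a),\sigma(b),\sigma(c)]_{\omega_2,\Delta_2}$. Hence $\sigma$ preserves the $3$-Lie bracket.

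Finally, since $\sigma$ is an associative algebra isomorphism it is a bijective linear map, so together with the bracket compatibility just established it is a $3$-Lie algebra isomorphism. The only real bookkeeping step is keeping the three cyclic terms of (3.8) straight during the substitution, but no nontrivial obstacle arises: the hypotheses $\omega_i\Delta_i+\Delta_i\omega_i=0$ were already used in Theorem 3.3 to guarantee that the brackets are well-defined $3$-Lie brackets, and they are not needed again for this transport-of-structure argument.
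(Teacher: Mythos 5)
Your proposal is correct and follows essentially the same route as the paper: a direct computation applying $\sigma$ to formula (3.8), using multiplicativity of $\sigma$ together with the intertwining relations $\sigma\omega_1=\omega_2\sigma$ and $\sigma\Delta_1=\Delta_2\sigma$ to obtain $[\sigma(a),\sigma(b),\sigma(c)]_{\omega_2,\Delta_2}$, with bijectivity inherited from the associative algebra isomorphism. Your closing observation that the anticommutation conditions $\omega_i\Delta_i+\Delta_i\omega_i=0$ are only needed to guarantee the brackets are $3$-Lie brackets (via Theorem 3.3), and not in the transport-of-structure computation itself, is also accurate.
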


\begin{proof}  For every $a, b, c\in A_1$, by Eq.(3.8),

\vspace{2mm}\noindent $\sigma([a, b, c]_{\omega_1,
\Delta_1})=\sigma(\omega_1(a)(b\Delta_1(c)-c\Delta_1(b))+\omega_1(b)(c\Delta_1(a)-a\Delta_1(c))+
\omega_1(c)(a\Delta_1(b)-b\Delta_1(a)))$

\vspace{2mm}\noindent\hspace{2.4cm} $=\omega_2(\sigma(a))(\sigma
(b)\Delta_2\sigma(c)-\sigma(c)\Delta_2(\sigma(b)))+\omega_2(\sigma(b))(\sigma(c)\Delta_2(\sigma(a))-\sigma(a)\Delta_2(\sigma(c)))$

\vspace{2mm}\noindent\hspace{2.5cm}$+
\omega_2(\sigma(c))(\sigma(a)\Delta_2(\sigma(b))-\sigma(b)\Delta_2(\sigma(a)))$

\vspace{2mm}\noindent\hspace{2.5cm}$=[\sigma(a), \sigma(b),
\sigma(c)]_{\omega_2, \Delta_2}.$

It follows the result. \end{proof}

\begin{coro}  {\it  If $A$ is a
nilpotent commutative associative algebra, then $(A, [ , ,
]_{\omega, \Delta}) $ is a nilpotent $3$-Lie algebra.}

\end{coro}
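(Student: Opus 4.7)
The plan is to show that the bracket in (3.8) sends triples of elements from powers $A^p,A^q,A^r$ of the associative algebra into $A^{p+q+r}$, so that iterating the bracket forces results into arbitrarily high powers of $A$; associative nilpotence of $A$ then immediately gives nilpotence of the $3$-Lie algebra $(A,[\,,\,,\,]_{\omega,\Delta})$.

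First I would verify that the associative powers $A^m$ are preserved by both $\omega$ and $\Delta$. Since $\omega$ is an algebra homomorphism, $\omega(a_1\cdots a_m)=\omega(a_1)\cdots\omega(a_m)$, so $\omega(A^m)\subseteq A^m$; since $\Delta$ is a derivation, the Leibniz rule writes $\Delta(a_1\cdots a_m)$ as a sum of $m$ products of $m$ elements of $A$, so $\Delta(A^m)\subseteq A^m$.

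Next I would inspect the six summands displayed in (3.8). Each is of the form $\pm\omega(x_1)\,x_2\,\Delta(x_3)$ with $\{x_1,x_2,x_3\}=\{a,b,c\}$, hence a product of three factors in the commutative associative algebra $A$. Combining this with the previous step, if $a\in A^p$, $b\in A^q$, $c\in A^r$, then $\omega(a)\in A^p$ and $\Delta(c)\in A^r$, and the inclusion $[A^p,A^q,A^r]_{\omega,\Delta}\subseteq A^{p+q+r}$ follows.

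Finally I would apply this degree-raising inclusion to the lower central series of the $3$-Lie algebra. Setting $L^{\langle 1\rangle}=[A,A,A]_{\omega,\Delta}$ and inductively $L^{\langle k+1\rangle}=[L^{\langle k\rangle},A,A]_{\omega,\Delta}$, a straightforward induction yields $L^{\langle k\rangle}\subseteq A^{2k+1}$. By the associative nilpotence of $A$ there exists $n$ with $A^n=0$; choosing any $k$ with $2k+1\geq n$ then forces $L^{\langle k\rangle}=0$, giving nilpotence of $(A,[\,,\,,\,]_{\omega,\Delta})$. There is no real obstacle; the only point that must be made explicit is that $\omega$ and $\Delta$ respect the filtration by associative powers, as the whole argument rests on this observation together with the fact that every summand of (3.8) is a product of exactly three factors of $A$.
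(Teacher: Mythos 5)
Your proof is correct, and it rests on the same key observation as the paper's: every summand of Eq.~(3.8) is a product of three elements of $A$, and $\omega$, $\Delta$ preserve the associative powers $A^m$, so the bracket raises degree and associative nilpotence of $A$ kills everything. The difference is in how this observation is deployed. The paper's proof is a one-liner: from $A^m=0$ it asserts $ad^m(a,b)(A)\subseteq A^m=0$ for all $a,b\in A$, i.e.\ it shows that each inner derivation $ad(a,b)$ is a nilpotent operator. Strictly speaking that is an Engel-type condition, not the definition of nilpotency of a $3$-Lie algebra, which requires the vanishing of the lower central series $L^{\langle k+1\rangle}=[L^{\langle k\rangle},A,A]_{\omega,\Delta}$, where the two outer arguments vary over all of $A$ rather than being powers of a single $ad(a,b)$; Kasymov's Engel theorem would close that gap only in finite dimension, and $A$ need not be finite dimensional here. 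Your version avoids this issue: the inclusion $[A^p,A^q,A^r]_{\omega,\Delta}\subseteq A^{p+q+r}$ together with the induction $L^{\langle k\rangle}\subseteq A^{2k+1}$ bounds the actual lower central series, and choosing $2k+1\geq n$ with $A^n=0$ finishes the argument. You also make explicit the two facts the paper leaves tacit --- $\omega(A^m)\subseteq A^m$ (multiplicativity) and $\Delta(A^m)\subseteq A^m$ (Leibniz rule) --- without which the degree count would not go through. So your argument is the same idea as the paper's, but carried out on the correct object and with the supporting lemmas spelled out; it is the more complete write-up of the two.
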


\begin{proof} Suppose  $A^m=0$ for some positive integer
$m$. Then for every $a_1, \cdots, a_m\in A$, $a_1\cdots a_m=0$.
Thanks to
 Eq. (3.8), for every $a, b\in A$, $ad^m(a, b)(A)\subseteq A^m=0$. Therefore, $(A, [ , , ]_{\omega, \Delta}) $ is a nilpotent $3$-Lie
 algebra. \end{proof}

\begin{flushleft}
\section{$3$-Lie algebras constructed by group algebras }
\end{flushleft}
\setcounter{equation}{0}
\renewcommand{\theequation}
{4.\arabic{equation}}

 Let $(G, +)$ be an additive
Abelian group, and $F[G]$ be the group algebra, that is, $F[G]$ is a
commutative associative algebra with a basis $\{ e_g ~ | g\in G\}$,
and for every $  x=\sum\limits_{g\in G}\lambda_g e_g$,
$y=\sum\limits_{h\in G}\mu_h e_h\in F[G]$,
$$x+y=\sum\limits_{g\in G}(\lambda_g +\mu_g)e_g, ~~ xy=(\sum\limits_{g\in G}\lambda_g e_g)(\sum\limits_{h\in
G}\mu_h e_h)=\sum\limits_{g, h\in G}\lambda_g\mu_h e_{g+h}.$$

Define linear mapping $\omega : F[G] \rightarrow F[G]$,

 \begin{equation}\omega(x)=\omega(\sum\limits_{g\in
G}\lambda_g e_g)=\sum\limits_{g\in G}\lambda_g e_{-g},~~ \forall
x=\sum\limits_{g\in G}\lambda_g e_g\in F[G].
\end{equation}
Then $\omega$ is a linear isomorphism of $F[G]$ and for  $
x=\sum\limits_{g\in G}\lambda_g e_g$, $y=\sum\limits_{h\in G}\mu_h
e_h\in F[G],$ $$\omega^2(x)=\omega(\sum\limits_{g\in G}\lambda_g
e_{-g})=\sum\limits_{g\in G}\lambda_g e_g=x,$$

$$\omega(xy)=\omega(\sum\limits_{g\in G}\lambda_g e_g \sum\limits_{h\in G}\mu_h
e_h)=\omega(\sum\limits_{g, h\in G}\lambda_g\mu_h
e_{g+h})=\sum\limits_{g, h\in G}\lambda_g\mu_h e_{-g-h}$$

\vspace{2mm}\hspace{2.8cm}$=\sum\limits_{g\in G}\lambda_g
e_{-g}\sum\limits_{h G}\mu_h e_{-h} =\omega(\sum\limits_{g\in
G}\lambda_g e_g)\omega(\sum\limits_{h\in G}\mu_h
e_h)=\omega(x)\omega(y).$

Therefore, $\omega$ is an involution of the commutative associative
algebra $F[G]$.

Denote $F^{+}$ the addition group of $F$.  For  every $\alpha\in Hom
(G, F^+)$, then $\alpha$ satisfies
$\alpha(g+h)=\alpha(g)+\alpha(h)$, $\forall g, h\in G$. Define
linear mapping $\alpha^{*} : F[G]\rightarrow F[G]$ as follows:

\begin{equation}\alpha^{*}(\sum\limits_{g\in G}\lambda_g e_g)=\sum\limits_{g\in
G}\lambda_g \alpha(g) e_g, ~~ \forall \sum\limits_{g\in G}\lambda_g
e_g\in F[G].\end{equation}

\begin{lemma}

Let $(G, +)$ be an  abelian group,  $\omega: F[G]\rightarrow F[G]$
be defined as Eq.(4.1). Then for every  $\alpha\in Hom(G, F^{+})$,
$\alpha^*$ defined as Eq. (4.2) is a derivation of the algebra
$F[G]$, and satisfies $\omega\alpha^*+\alpha^*\omega=0.$

\end{lemma}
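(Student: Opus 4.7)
The plan is to verify the two claims directly on the basis $\{e_g \mid g\in G\}$ and extend by linearity, since both $\alpha^{*}$ and $\omega$ are linear and multiplication in $F[G]$ is bilinear.

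First, to show that $\alpha^{*}$ is a derivation, it suffices by linearity to check the Leibniz rule on basis products $e_g \cdot e_h = e_{g+h}$. On the left,
\begin{equation*}
\alpha^{*}(e_g e_h) = \alpha^{*}(e_{g+h}) = \alpha(g+h)\, e_{g+h},
\end{equation*}
and on the right,
\begin{equation*}
\alpha^{*}(e_g)\,e_h + e_g\,\alpha^{*}(e_h) = \alpha(g)\,e_{g+h} + \alpha(h)\,e_{g+h} = \bigl(\alpha(g)+\alpha(h)\bigr)e_{g+h}.
\end{equation*}
These agree precisely because $\alpha \in \mathrm{Hom}(G, F^{+})$ satisfies $\alpha(g+h) = \alpha(g) + \alpha(h)$. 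So the derivation property is really a direct transcription of the homomorphism property of $\alpha$; nothing deeper is needed.

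Next, to verify $\omega\alpha^{*} + \alpha^{*}\omega = 0$, again I would evaluate both compositions on $e_g$. We have $\omega\alpha^{*}(e_g) = \omega(\alpha(g)e_g) = \alpha(g)e_{-g}$, while $\alpha^{*}\omega(e_g) = \alpha^{*}(e_{-g}) = \alpha(-g)e_{-g}$. The key observation is that since $\alpha$ is a homomorphism into the additive group $F^{+}$, one has $\alpha(-g) = -\alpha(g)$, so the two terms cancel on every basis element. Extending by linearity gives $\omega\alpha^{*} + \alpha^{*}\omega = 0$ on all of $F[G]$.

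There is no real obstacle here; the only content is matching the homomorphism identity $\alpha(g+h) = \alpha(g)+\alpha(h)$ with the Leibniz rule, and the identity $\alpha(-g) = -\alpha(g)$ with the anticommutation with $\omega$. The main care needed is just to reduce everything cleanly to the basis before writing down the sums.
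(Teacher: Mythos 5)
Your proof is correct and follows essentially the same route as the paper: both verifications reduce to the identities $\alpha(g+h)=\alpha(g)+\alpha(h)$ (for the Leibniz rule) and $\alpha(-g)=-\alpha(g)$ (for anticommutation with $\omega$). The only cosmetic difference is that you check everything on basis elements and invoke (bi)linearity, whereas the paper writes out the same computation on general sums $\sum_{g}\lambda_g e_g$.
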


\begin{proof}  By Eqs. (4.1) and (4.2), for arbitrary $
x=\sum\limits_{g\in G}\lambda_g e_g$, $y=\sum\limits_{h\in G}\mu_h
e_h\in F[G],$

$$\hspace{-2cm}\alpha^{*}((\sum\limits_{g\in G}\lambda_g e_g)(\sum\limits_{h\in G}\mu_h
e_h))=\alpha^{*}(\sum\limits_{g, h\in G}\lambda_g\mu_h
e_{g+h})=\sum\limits_{g, h\in G}\lambda_g\mu_h
(\alpha(g)+\alpha(h))e_{g+h}$$

$$=\sum\limits_{g, h\in G}\lambda_g\mu_h \alpha(g)e_{g+h}+\sum\limits_{g, h\in G}\lambda_g\mu_h \alpha(h)e_{g+h}=\alpha^{*}(\sum\limits_{g\in G}\lambda_g e_g)(\sum\limits_{h\in G}\mu_h
e_h)+(\sum\limits_{g\in G}\lambda_g e_g)\alpha^{*}(\sum\limits_{h\in
G}\mu_h e_h),$$

$$\hspace{-5cm}(\omega\alpha^*+\alpha^*\omega)(\sum\limits_{g\in G}\lambda_g e_g)=\omega(\sum\limits_{g\in G}\lambda_g\alpha(g)
e_g)+\alpha^*(\sum\limits_{g\in G}\lambda_g e_{-g})
$$
$$\hspace{-3cm}=\sum\limits_{g\in
G}\lambda_g\alpha(g) e_{-g}+\sum\limits_{g\in G}\lambda_g
\alpha(g^{-1})e_{-g}=\sum\limits_{g\in G}\lambda_g\alpha(g)
e_{-g}-\sum\limits_{g\in G}\lambda_g \alpha(g)e_{-g}=0.$$ It follows
the result. \end{proof}

\begin{theorem}
Let $G$ be an  abelian group,  $\omega: F[G]\rightarrow F[G]$ be
defined as Eq.(4.1), $\alpha\in Hom(G, F^{+})$, $\alpha^*$ be
defined as Eq. (4.2). Then $(F[G], [ , , ]_{\omega,\alpha^*})$ is a
$3$-Lie algebra, where  for arbitrary  $ \sum\limits_{g\in
G}\lambda_g e_g,$ $\sum\limits_{h\in G}\mu_h e_h,$
$\sum\limits_{q\in G}\nu_q e_q\in F[G],$

\begin{equation} [\sum\limits_{g\in G}\lambda_g
e_g,\sum\limits_{h\in G}\mu_h e_h,\sum\limits_{q\in G}\nu_q
e_q]_{\omega,\alpha^*}=\sum\limits_{g, h, q\in G}\lambda_g\mu_h\nu_q
( \alpha(q-h)e_{h+q-g}\end{equation}

\vspace{2mm}\hspace{7.5cm}$+\alpha(g-q)e_{g+q-h}+
\alpha(h-g)e_{g+h-q}).$

\end{theorem}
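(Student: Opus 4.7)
The plan is to reduce Theorem 4.2 to Theorem 3.3 by verifying its hypotheses for the concrete data $(F[G],\omega,\alpha^{*})$, and then to extract formula (4.3) by a direct computation on basis elements. No independent verification of the $3$-Jacobi identity is needed, since that work has already been done in the proof of Theorem 3.3.

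First I would collect the ingredients that are already in place. The computation preceding Lemma 4.1 shows that the map $\omega$ defined by (4.1) satisfies $\omega^{2}=\mathrm{Id}$ and $\omega(xy)=\omega(x)\omega(y)$, so $\omega$ is an involution of the commutative associative algebra $F[G]$. Lemma 4.1 supplies the remaining two facts: $\alpha^{*}\in \mathrm{Der}\,F[G]$ and $\omega\alpha^{*}+\alpha^{*}\omega=0$. Thus the triple $(F[G],\omega,\alpha^{*})$ satisfies exactly the hypothesis of Theorem 3.3, so $(F[G],[\,,\,,\,]_{\omega,\alpha^{*}})$ is a $3$-Lie algebra under the multiplication (3.8).

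Next I would unfold (3.8) on basis elements to match (4.3). Using $\omega(e_{g})=e_{-g}$, $\alpha^{*}(e_{g})=\alpha(g)e_{g}$, and the group algebra product, one gets
\begin{align*}
[e_{g},e_{h},e_{q}]_{\omega,\alpha^{*}}
&=e_{-g}\bigl(e_{h}\alpha(q)e_{q}-e_{q}\alpha(h)e_{h}\bigr)\\
&\quad+e_{-h}\bigl(e_{q}\alpha(g)e_{g}-e_{g}\alpha(q)e_{q}\bigr)
+e_{-q}\bigl(e_{g}\alpha(h)e_{h}-e_{h}\alpha(g)e_{g}\bigr)\\
&=(\alpha(q)-\alpha(h))e_{h+q-g}+(\alpha(g)-\alpha(q))e_{g+q-h}+(\alpha(h)-\alpha(g))e_{g+h-q}.
\end{align*}
Because $\alpha\in\mathrm{Hom}(G,F^{+})$, one has $\alpha(q)-\alpha(h)=\alpha(q-h)$, and similarly for the other two coefficients, which gives precisely the summand displayed in (4.3). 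Extending trilinearly in $\sum\lambda_{g}e_{g}$, $\sum\mu_{h}e_{h}$, $\sum\nu_{q}e_{q}$ yields formula (4.3).

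No step is really an obstacle here: the skew-symmetry and $3$-Jacobi identity are inherited from Theorem 3.3, the involution property of $\omega$ is a direct check on basis elements, the derivation property of $\alpha^{*}$ and the anticommutation $\omega\alpha^{*}+\alpha^{*}\omega=0$ are supplied by Lemma 4.1. The only thing to be careful about is the bookkeeping in the basis computation and the use of additivity of $\alpha$ to group the six basis terms into the three terms of (4.3); this is the place where one should write the computation out cleanly to avoid sign errors.
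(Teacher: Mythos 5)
Your proposal is correct and takes essentially the same route as the paper: it invokes Lemma 4.1 together with Theorem 3.3 to obtain the $3$-Lie structure, then expands formula (3.8) and uses additivity of $\alpha$ to arrive at (4.3). The only cosmetic difference is that you compute on basis elements $e_g, e_h, e_q$ and extend trilinearly, while the paper expands the determinant directly on general sums; the content is identical.
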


\begin{proof}  By Lemma 4.1 and Theorem 3.3, for arbitrary  $
\sum\limits_{g\in G}\lambda_g e_g,$ $\sum\limits_{h\in G}\mu_h e_h,$
$\sum\limits_{q\in G}\nu_q e_q\in F[G],$

$$[\sum\limits_{g\in G}\lambda_g e_g,\sum\limits_{h\in G}\mu_h
e_h,\sum\limits_{q\in G}\nu_q e_q]_{\omega,\alpha^*}
 =\begin{vmatrix} \sum\limits_{g\in G}\lambda_g e_{-g} & \sum\limits_{h\in G}\mu_h e_{-h} & \sum\limits_{q\in G}\nu_q e_{-q} \\
\sum\limits_{g\in G}\lambda_g e_g & \sum\limits_{h\in G}\mu_h e_h & \sum\limits_{q\in G}\nu_q e_q  \\
\sum\limits_{g\in G}\lambda_g\alpha(g) e_g & \sum\limits_{h\in G}\mu_h\alpha(h) e_h & \sum\limits_{q\in G}\nu_q\alpha(q) e_q  \\
\end{vmatrix}$$

\vspace{2mm}\hspace{1.2cm}$=\sum\limits_{g, h, q\in
G}\lambda_g\mu_h\nu_q \{
(\alpha(q)-\alpha(h))e_{h+q-g}+(\alpha(g)-\alpha(q))e_{g+q-h}+(\alpha(h)-\alpha(g))e_{g+h-q}\}$

\vspace{2mm}\hspace{1.2cm}$=\sum\limits_{g, h, q\in
G}\lambda_g\mu_h\nu_q ( \alpha(q-h)e_{h+q-g}+\alpha(g-q)e_{g+q-h}+
\alpha(h-g)e_{g+h-q}).$

It follows the result. \end{proof}

By the above discussions, the products  of  basis vectors $\{ e_g ~
|~ g\in G\} $  of $3$-Lie algebra $(F[G], [ , , ]_{\omega,
\alpha^*})$ are as follows: for arbitrary $g, h, w\in G$,

\begin{equation}\hspace{-3cm}[e_g, e_h, e_w]_{\omega, \alpha^*}=\begin{vmatrix} e_{-g}&  e_{-h}&
     e_{-w} \\
e_g& e_h&
   e_w  \\
 \alpha(g)e_g& \alpha(h)e_h&
  \alpha(w)e_w \\
\end{vmatrix}\end{equation}

\vspace{2mm}\hspace{4.8cm}$=
\alpha(w-h)e_{h+w-g}+\alpha(g-w)e_{g+w-h}+ \alpha(h-g)e_{g+h-w}. $

For $\alpha\in Hom (G, F^+)$, define  mapping

\begin{equation} \phi_{\alpha}: F[G] \rightarrow F, ~~  \phi_{\alpha}(x)=\sum\limits_{g\in
G}\lambda_g\alpha(g), ~~ \forall~~ x=\sum\limits_{g\in G}\lambda_g
e_g\in F[G].
\end{equation}

 Denote ~ $I_0=\{x |~ x=\sum\limits_{g\in
G}\lambda_g e_g\in F[G], ~ \phi_{\alpha}(x)=\sum\limits_{g\in
G}\lambda_g \alpha(g)=0\}.$ Then $I_0$ is a subspace of $F[G]$. And
we have the following result.

\begin{theorem}

For $\alpha\in Hom (G, F^+)$, if $\alpha\neq 0$, then $I_0$ is a
maximal  ideal of $3$-Lie algebra $(F[G], [ , ,
]_{\omega,\alpha^*})$. Therefore, $(F[G], [ , ,
]_{\omega,\alpha^*})$ is a non-simple $3$-Lie algebra.
\end{theorem}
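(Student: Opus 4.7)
The plan is to exhibit $\phi_\alpha$ as a linear functional whose kernel $I_0$ turns out to be an ideal of codimension one in $F[G]$; such an ideal is automatically maximal and, since $\alpha\neq 0$, proper and nonzero, which proves the theorem.

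The one substantive step is to verify that $\phi_\alpha$ annihilates the entire derived algebra, so that in particular $[I_0,F[G],F[G]]_{\omega,\alpha^*}\subseteq I_0$. By trilinearity it suffices to check this on basis vectors, and using the explicit product formula (4.4) I would compute
\begin{equation*}
\phi_\alpha([e_g,e_h,e_w]_{\omega,\alpha^*}) = \alpha(w-h)\alpha(h+w-g) + \alpha(g-w)\alpha(g+w-h) + \alpha(h-g)\alpha(g+h-w).
\end{equation*}
Since $\alpha$ is a homomorphism into $F^+$, writing $a=\alpha(g)$, $b=\alpha(h)$, $c=\alpha(w)$ reduces this to $(c-b)(b+c-a)+(a-c)(a+c-b)+(b-a)(a+b-c)$, and a direct expansion shows that the six quadratic monomials cancel in pairs and yield $0$. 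Hence $[F[G],F[G],F[G]]_{\omega,\alpha^*}\subseteq I_0$, which is more than enough to conclude that $I_0$ is an ideal of the $3$-Lie algebra.

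For the remaining claims, since $\alpha\neq 0$ one picks $g_0\in G$ with $\alpha(g_0)\neq 0$; then $\phi_\alpha(e_{g_0})=\alpha(g_0)\neq 0$, so $\phi_\alpha:F[G]\to F$ is surjective and $\dim F[G]/I_0 = 1$. This codimension forces any ideal strictly containing $I_0$ to coincide with $F[G]$, establishing maximality. Moreover $\phi_\alpha(e_0)=\alpha(0)=0$, so $e_0\in I_0\setminus\{0\}$, and $e_{g_0}\notin I_0$ shows $I_0\neq F[G]$. Therefore $I_0$ is a proper nontrivial ideal, and $(F[G],[\,,\,,\,]_{\omega,\alpha^*})$ fails to be simple.

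The only step carrying any real content is the cancellation in the quadratic expression above; I expect it to be the main (if minor) obstacle, since one must be careful that $\alpha$ is additive into $F^+$ rather than merely a set map, so that $\alpha(w-h)=\alpha(w)-\alpha(h)$ and the reduction to a polynomial identity in $a,b,c$ is legitimate. Everything else is elementary bookkeeping about kernels of surjective linear functionals and the codimension-one characterization of maximal subspaces.
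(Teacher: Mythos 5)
Your proposal is correct and follows essentially the same route as the paper: both proofs show that $\phi_\alpha$ vanishes on all brackets $[e_g,e_h,e_w]_{\omega,\alpha^*}$ via the same quadratic cancellation, conclude that the derived algebra lies in $I_0$ (hence $I_0$ is an ideal), and then get maximality from the codimension-one property of $\ker\phi_\alpha$, which the paper phrases as the decomposition $F[G]=Fe_d\,\dot+\,I_0$. Your version is marginally cleaner in that using surjectivity of the nonzero functional $\phi_\alpha$ avoids the paper's ``without loss of generality $\alpha(d)=1$'' (which is not literally achievable for arbitrary $\alpha$), and in that you explicitly check $I_0\neq 0$ via $e_0\in I_0$.
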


\begin{proof}  For arbitrary  $ g, h, w\in G$, by Eq. (4.4),

$\phi_{\alpha}([e_g, e_h, e_w]_{\omega,
\alpha^*})=\phi_{\alpha}(\alpha(w-h)e_{h+w-g}+\alpha(g-w)e_{g+w-h}+
\alpha(h-g)e_{g+h-w})$

\vspace{2mm}\hspace{2.8cm}$=(\alpha(w)-\alpha(h))(\alpha(h)+\alpha(w)-\alpha(g))$+$(\alpha(g)-\alpha(w))(\alpha(g)+\alpha(w)-\alpha(h))$

\vspace{2mm}\hspace{2.8cm}+$(\alpha(h)-\alpha(g))(\alpha(g)+\alpha(h)-\alpha(w))$$=0$.
\\
It follows that the derived algebra of $(F[G], [ , ,
]_{\omega,\alpha^*})$ is contained in $I_{0}$. Therefore,  $I_0$ is
an ideal of the $3$-Lie algebra $(F[G], [ , , ]_{\omega,\alpha^*})$.

Since $\alpha\neq 0$, without loss of generality suppose
$\alpha(d)=1$ for some non-zero element $d$ of $G$. Then for every
$x\in F[G]$, $x=\phi_{\alpha}(x)e_d+(x-\phi_{\alpha}(x)e_d)$. Since
$$\phi_{\alpha}(x-\phi_{\alpha}(x)e_d)=\phi_{\alpha}(x)-\phi_{\alpha}(x)=0,$$
we have  $F[G]=Fe_d\dot+ I_0$ as the direct sum of subspaces.
Therefore, $I_0$ is a maximal  ideal of $3$-Lie algebra $(F[G], [ ,
, ]_{\omega,\alpha^*})$. ~~\end{proof}

 \vspace{2mm}\noindent{\bf Example 4.1 } Let $G=\{ A=(a_{ij})| a_{ij}\in F, 1\leq i\leq m, 1\leq j\leq n\}
 $ be the set of all $(m\times n)$-matrices over a field $F$. Then
 $G$ is an abelian group in the addition:   $\forall A=(a_{ij}), B=(b_{ij})\in G$,
 $A+B=(a_{ij}+b_{ij})$.
Define $\alpha: G\rightarrow F^+$ and $\omega: F[G]\rightarrow F[G]$
as follows
 $$\alpha(A)=\frac{1}{2}\sum\limits_{i=1}^m\sum\limits_{j=1}^na_{ij},  ~~  \hspace{5mm}\omega(e_A)=e_{-A}.$$
 Then $\alpha\in Hom(G, F^+)$ and $\omega$ is an involution of $F[G]$. By Theorem 4.2,
   $F[G]$ is an $mn$-dimensional $3$-Lie algebra with the multiplication:  $\forall A=(a_{ij}), B=(b_{ij}), C=(c_{ij})\in G,$
$$
 [e_A, e_B,
 e_C]_{\omega,\alpha^*}=\sum\limits_{i=1}^m\sum\limits_{j=1}^n(c_{ij}-b_{ij})e_{B+C-A}+\sum\limits_{i=1}^m\sum\limits_{j=1}^n(a_{ij}-c_{ij})e_{C+A-B}+\sum\limits_{i=1}^m\sum\limits_{j=1}^n(b_{ij}-a_{ij})e_{A+B-C}.
$$

\vspace{2mm}\noindent{\bf Example 4.2 } Let $G=\{ A=(a_{ij})|
a_{ij}\in F, 1\leq i, j\leq n\}
 $ be the group of  all $(n\times n)$-matrices over a field $F$ with the addition: $\forall A=(a_{ij}), B=(b_{ij})\in G$,
 $A+B=(a_{ij}+b_{ij})$. Let $\beta\in Hom (G, F^+)$,
 $\beta(A)=tr(A)=\sum\limits_{i=1}^na_{ii}$, $\omega: F[G]\rightarrow
 F[G]$, $\omega(e_A)=e_{-A}$. Then  by Theorem 4.2,  $F[G]$ is an $n^2$-dimensional $3$-Lie algebra in the multiplication:  $\forall A=(a_{ij}), B=(b_{ij}), C=(c_{ij})\in G,$
$$ [e_A, e_B,
 e_C]_{\omega,\beta^*}=tr(C-B)e_{B+C-A}+tr(A-C)e_{A+C-B}+tr(B-A)e_{A+B-C}.$$

 \vspace{2mm}\noindent{\bf Example 4.3}  Let $G =Z_p^+$  be the
 addition
group of the prime field $Z_p$, $ch F_p=p$. Then the multiplication
of the group algebra $Z_p[G]$ is ~~
$$e_{\bar{r}}e_{\bar s} =
e_{\overline{r+s}}, ~~\forall \bar s, \bar r\in G.$$ Define
$$\alpha : G\rightarrow Z_p^+ , ~~ \alpha(\bar r)=\bar r, ~~ \forall \bar r\in Z_p;$$
$$\omega: Z_p[G] \rightarrow Z_p[G],  ~~ \omega(e_{\bar r})=e_{\overline{-r}}, ~ ~ \bar r\in Z_p.$$

 By Theorem 4.2 ,  $(Z_p[G], [ , , ]_{\omega,\alpha^*})$ is
a $p$- dimensional $3$-Lie algebra with  the multiplication as
follows
$$
[e_{\bar r}, e_{\bar s}, e_{\bar k}]=\begin{vmatrix}
e_{\overline{-r}} & e_{\overline{-s}} & e_{\overline{-k}} \\
e_{\bar r} & e_{\bar s} & e_{\bar t}  \\
\bar r e_{\bar r} & \bar s e_{\bar s} & \bar k e_{\bar k} \\
\end{vmatrix}=\overline{k-s} e_{\overline{s+k-r}}+\overline{r-k} e_{\overline{k+r-s}}+ \overline{s-r} e_{\overline{r+s-k}}.
$$

\begin{flushleft}
\section{ $3$-Lie algebras  constructed from  Laurent polynomials }
\end{flushleft}
\setcounter{equation}{0}
\renewcommand{\theequation}
{5.\arabic{equation}}

In this section we study  $3$-Lie algebras constructed by Laurent
polynomials. In the following, denote $A=F[t^{-1}, t]$, the set of
Laurent polynomials over a field $F$.

We know that the derivation algebra $Der A=\{ t^s\frac{d}{dt}| ~
s\in Z\}$ with the product
\begin{equation}
[t^{m}\delta, t^{n}\delta]=(n-m)t^{m+n}\delta,~~ ~ m, n\in Z,
\end{equation}
where $\delta=t\frac{d}{dt}, $ $Z$ is the set of all integer
numbers.

Let $\omega: A \rightarrow A$ be an algebra homomorphism and satisfy
$\omega^2=Id_{A}.$ Since $\omega(1)=1$ and
$\omega(t)\omega(t^{-1})=1$, we have $ \omega(t)=\lambda t^r$,
$\lambda\in F$ and $\lambda\neq 0$. Then
$\omega(t^m)=\lambda^mt^{rm}$,

\vspace{2mm}
$$t^m=\omega^2(t^m)=\omega(\omega(t^m))=\omega((\omega(t))^m)=\omega(\lambda^m
t^{rm})=\lambda^{m+rm}t^{r^2m}.$$

\noindent We obtain $r=-1, ~ \lambda\neq 0;$  or $r=1$,
$\lambda=\pm1$. Therefore, we get the following result.

\begin{lemma} Let  $\omega:
A \rightarrow A$ be a linear map.
 If  $ch F\neq 2$, then $\omega$ is an involution of $A$ if and only
if $\omega$ satisfies
\begin{equation}
 \omega(t^m)=\varepsilon^m t^m, \varepsilon=\pm1, ~ \forall m\in Z; ~ or  ~ \omega(t^m)=\lambda^mt^{-m},  ~ \lambda\in F, ~ \lambda\neq 0, ~ \forall ~ m\in Z.
\end{equation}

\end{lemma}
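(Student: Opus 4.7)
The plan is to pin down $\omega(t)$ first, since an involution is an algebra automorphism and therefore sends units to units, and to then extend by multiplicativity and enforce $\omega^2=\mathrm{Id}_A$ to sift the possibilities. The converse direction is a direct check that each listed formula defines an algebra homomorphism whose square is the identity.

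For the forward direction, I would begin by noting that $\omega(1)=\omega(1)^2$ together with $A$ being a domain and $\omega$ being bijective forces $\omega(1)=1$, so $\omega(t)\omega(t^{-1})=1$ and $\omega(t)$ is a unit of $A=F[t,t^{-1}]$. The units of $A$ are exactly the Laurent monomials $\lambda t^r$ with $\lambda\in F^\times$ and $r\in\mathbb{Z}$ (any nonzero element is uniquely $c\,t^k$ times a polynomial of positive degree, and invertibility forces that polynomial factor to be a constant). Writing $\omega(t)=\lambda t^r$ and applying $\omega$ again, the condition $\omega^2(t)=t$ becomes
\begin{equation*}
t=\omega(\lambda t^r)=\lambda\,\omega(t)^r=\lambda^{1+r}t^{r^2},
\end{equation*}
which forces $r^2=1$ and $\lambda^{1+r}=1$. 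If $r=-1$, the second condition is automatic and $\lambda\in F^\times$ is arbitrary, giving $\omega(t^m)=\lambda^m t^{-m}$; if $r=1$, the second condition is $\lambda^2=1$, so $\lambda=\pm1$ (here is where $\mathrm{ch}\,F\neq 2$ is used to exclude further roots), giving $\omega(t^m)=\varepsilon^m t^m$ with $\varepsilon=\pm1$.

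For the converse, I would verify directly that both prescriptions respect the multiplication and square to the identity on the basis $\{t^m\}_{m\in\mathbb{Z}}$: for the first, $\omega(t^m t^n)=\varepsilon^{m+n}t^{m+n}=\omega(t^m)\omega(t^n)$ and $\omega^2(t^m)=\varepsilon^{2m}t^m=t^m$; for the second, $\omega(t^m t^n)=\lambda^{m+n}t^{-(m+n)}=\omega(t^m)\omega(t^n)$ and $\omega^2(t^m)=\lambda^m\,\omega(t^{-m})=\lambda^m\lambda^{-m}t^m=t^m$. Linearity then extends both checks to all of $A$.

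There is no real obstacle here; the only subtle point is the classification of units of $A$, which underwrites the reduction to $\omega(t)=\lambda t^r$, and the use of $\mathrm{ch}\,F\neq 2$ to guarantee that $\lambda^2=1$ has only the two solutions $\pm 1$. Everything else is forced by combining multiplicativity with $\omega^2=\mathrm{Id}_A$.
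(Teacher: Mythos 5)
Your proof is correct and takes essentially the same route as the paper, whose argument (given in the discussion preceding the lemma) likewise uses $\omega(1)=1$ and $\omega(t)\omega(t^{-1})=1$ to write $\omega(t)=\lambda t^{r}$, then applies $\omega^{2}=\mathrm{Id}_{A}$ to get $t^{m}=\lambda^{m+rm}t^{r^{2}m}$, forcing $r=-1$ with $\lambda\neq 0$ arbitrary, or $r=1$ with $\lambda=\pm 1$; you merely fill in details the paper leaves implicit (the classification of units of $F[t,t^{-1}]$ and the converse verification). One inaccuracy in a side remark only: $\mathrm{ch}\,F\neq 2$ is not what limits the solutions of $\lambda^{2}=1$ to $\pm 1$ (that holds over any field, since $(\lambda-1)(\lambda+1)=0$ in a domain); in characteristic $2$ the two solutions simply coincide, so this does not affect the validity of your deduction.
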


\begin{proof} The result follows  from the above
discussions. ~~\end{proof}

\begin{lemma}
 Let   $\omega$ be an involution of $A$, $\delta=t^l\frac{d}{dt}\in Der F[t^{-1}, t]$ and
$ch F\neq 2$. Then $\omega\delta+\delta \omega=0$ if and only if
$\omega, \delta$ satisfy the following one possibilities

$ (i)\hspace{4mm} \omega(t^m)=(-1)^m t^m, ~ \forall m\in Z, ~~~
\delta=t^{2k}\frac{d}{dt}, k\in Z.$

$(ii) \hspace{4mm} \omega(t^m)=\lambda^mt^{-m},  ~ \lambda\in F, ~
\lambda\neq 0, ~ \forall ~ m\in Z, ~~~\delta=t\frac{d}{dt}.$

$(ii)$ \hspace{4mm} If $chF=2, ~~ \omega(t^m)=t^{-m},   ~ \forall ~
m\in Z, ~~~\delta=t\frac{d}{dt}.$

\end{lemma}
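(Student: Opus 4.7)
The plan is to combine Lemma 5.1, which already classifies every involution of $A=F[t^{-1},t]$, with the given form $\delta=t^{l}\frac{d}{dt}$ of the derivation, and to impose the anticommutation relation $\omega\delta+\delta\omega=0$ on the basis $\{t^{m}:m\in\mathbb{Z}\}$. Since $\delta(t^{m})=mt^{m+l-1}$, the computation of $(\omega\delta+\delta\omega)(t^{m})$ reduces to a single monomial identity in $A$, from which each case of the lemma can be read off by matching exponents.

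First I would handle the first family of involutions from Lemma 5.1, namely $\omega(t^{m})=\varepsilon^{m}t^{m}$ with $\varepsilon=\pm 1$. A direct calculation gives
\[
(\omega\delta+\delta\omega)(t^{m})=mt^{m+l-1}\bigl(\varepsilon^{m+l-1}+\varepsilon^{m}\bigr).
\]
If $\varepsilon=1$, the bracket equals $2m$, and the hypothesis $\mathrm{ch}\,F\neq 2$ forces this to be nonzero for $m\neq 0$, which is impossible. Hence $\varepsilon=-1$, and then vanishing for all $m$ forces $(-1)^{l-1}=-1$, i.e.\ $l$ is even, giving case $(i)$.

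Next I would treat the second family $\omega(t^{m})=\lambda^{m}t^{-m}$ with $\lambda\neq 0$. Here
\[
\omega\delta(t^{m})=m\lambda^{m+l-1}t^{1-m-l},\qquad \delta\omega(t^{m})=-m\lambda^{m}t^{l-1-m},
\]
which are monomials in degrees $1-m-l$ and $l-1-m$ respectively. These two degrees coincide only when $l=1$. If $l\neq 1$, the two monomials are linearly independent in $A$, so the sum vanishes iff $m\lambda^{m+l-1}=m\lambda^{m}=0$ for all $m$; since $\lambda\neq 0$, this is impossible. Thus $l=1$, and with $l=1$ the two terms cancel automatically for every $\lambda\neq 0$, yielding case $(ii)$. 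The converse direction in both cases is immediate from the same displayed identities.

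The computation is short and contains no real obstacle; the one point that requires care is the exponent-matching argument in the second family, where I must insist on $t^{1-m-l}$ and $t^{l-1-m}$ being distinct basis vectors of $A$ unless $l=1$, rather than collapsing them prematurely. Provided this is respected, the two possibilities $(i)$ and $(ii)$ exhaust all solutions, and the parenthetical characteristic-two case $(iii)$ is recovered by specialising the same computation to $\mathrm{ch}\,F=2$, where $\varepsilon=\pm 1$ collapses to $\varepsilon=1$ and the identity $2m=0$ automatically forces $l=1$ together with $\omega(t^{m})=t^{-m}$ as the only nontrivial possibility.
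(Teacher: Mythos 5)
Your proof is correct and follows exactly the route the paper intends: its entire proof of this lemma is the one-line remark that ``the result follows from Lemma 5.1 and the direct computation,'' and your case analysis over the two families of involutions from Lemma 5.1, together with the exponent-matching argument $1-m-l \neq l-1-m$ unless $l=1$, is precisely that computation carried out in full. The only soft spot is your closing remark on the characteristic-two case, which is hand-wavy, but that case is stated incoherently in the paper itself (it contradicts the lemma's hypothesis $\mathrm{ch}\,F\neq 2$), so nothing essential is affected.
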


\begin{proof} The result follows  from Lemma 5.1
and the direct computation. ~~\end{proof}

\begin{theorem}

Let   $\delta=t\frac{d }{dt}$,  $ch F\neq 2$, $\omega_{\lambda}:
A\rightarrow A,$
$$ \omega_{\lambda}(t^m)=\lambda^m
t^{-m}, ~ \lambda\in F, ~ \lambda\neq 0, ~ \forall m\in Z.$$ Then
$(A, ~ [ , , ]_{\omega_{\lambda},\delta})$ is a $3$-Lie algebra in
the multiplication:  $\forall ~~t^l, t^m, t^n\in L$,

\begin{equation}\hspace{-5cm}[t^l, t^m,
t^n]_{\omega_{\lambda}, \delta}=\begin{vmatrix}
\lambda^lt^{-l} & \lambda^mt^{-m} &\lambda^nt^{-n} \\
t^l & t^m & t^n  \\
lt^l & mt^m & nt^n \\
\end{vmatrix}
\end{equation}

\vspace{2mm}\hspace{4cm}$=\lambda^l(n-m)t^{m+n-l}+\lambda^m(l-n)t^{n+l-m}+\lambda^n(m-l)t^{l+m-n}.$

\end{theorem}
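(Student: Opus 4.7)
The plan is to derive this theorem as a direct specialization of Theorem 3.3, using Lemmas 5.1 and 5.2 to verify the two hypotheses. First I would observe that $A = F[t^{-1}, t]$ is a commutative associative algebra over $F$, and that the map $\omega_{\lambda}$ described by $\omega_{\lambda}(t^m) = \lambda^m t^{-m}$ falls into the second case of Lemma 5.1, which classifies all involutions on $A$ when $\mathrm{ch}\, F \neq 2$. Hence $\omega_{\lambda}$ is an involution of $A$.

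Next I would verify the compatibility condition $\omega_{\lambda}\delta + \delta\omega_{\lambda} = 0$ required by Theorem 3.3. This is precisely case (ii) of Lemma 5.2, applied with the choice $\delta = t\frac{d}{dt}$. For completeness, one can confirm it on the basis: since $\delta(t^m) = m t^m$, we have $\omega_{\lambda}\delta(t^m) = m\lambda^m t^{-m}$, while $\delta\omega_{\lambda}(t^m) = \delta(\lambda^m t^{-m}) = -m\lambda^m t^{-m}$, so their sum vanishes on every basis vector and therefore on all of $A$.

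With both hypotheses of Theorem 3.3 verified, I can immediately conclude that $(A, [\,,\,,\,]_{\omega_{\lambda},\delta})$ is a $3$-Lie algebra whose multiplication is given by the determinant formula (3.8). To obtain the explicit expression on basis monomials, I would substitute $a = t^l$, $b = t^m$, $c = t^n$ into (3.8), using $\omega_{\lambda}(t^k) = \lambda^k t^{-k}$ and $\delta(t^k) = k t^k$. Expanding the resulting $3 \times 3$ determinant and simplifying the products of monomials by commutativity (for instance, $t^{-l} \cdot t^m \cdot n t^n = n t^{m+n-l}$) yields the three-term expression claimed in the statement.

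No real obstacle arises here: all the genuine content is packaged into Theorem 3.3 (the $3$-Jacobi identity via the determinant construction) and Lemma 5.2 (the anti-commutation of $\omega_{\lambda}$ with $\delta$). What remains is an unambiguous but clerical determinant expansion, and the only point requiring minor care is bookkeeping of signs and exponents when collecting the six terms of the determinant into the three displayed summands.
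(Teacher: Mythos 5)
Your proposal is correct and follows essentially the same route as the paper, whose proof of this theorem is precisely ``the result follows from Theorem 3.3 and Lemma 5.2.'' You simply spell out the verification (involution via Lemma 5.1, anti-commutation $\omega_{\lambda}\delta+\delta\omega_{\lambda}=0$ checked on basis monomials, and the determinant expansion) that the paper leaves implicit.
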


\begin{proof} The result follows from Theorem
3.3 and Lemma 5.2. \end{proof}

\begin{coro} Let $\lambda=1$ in Theorem 5.3. Then  the multiplication of the $3$-Lie algebra $(A, [, , ]_{\omega_1, \delta})$ is as follows: $\forall~ t^l, t^m,
t^n\in A,$
\begin{equation}
  \hspace{-4cm}  [t^{l}, t^{m},
    t^{n}]_{\omega_1, \delta}=\begin{vmatrix}
t^{-l} & t^{-m} &t^{-n} \\
t^l & t^m & t^n  \\
lt^l & mt^m & nt^n \\
\end{vmatrix}
\end{equation}

\vspace{2mm}\hspace{5cm}$=(n-m)t^{m+n-l}+(l-n)t^{n+l-m}+(m-l)t^{l+m-n}.$

\vspace{2mm}And \hspace{2cm}$\omega([ t^l, t^m, t^n]_{\omega_1,
\delta})=-[ \omega(t^l), \omega(t^m), \omega(t^n)]_{\omega_1,
\delta}.$ ~~

\end{coro}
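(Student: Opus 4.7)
The first displayed formula is immediate: it is obtained from Theorem~5.3 by specializing $\lambda=1$, so that $\omega_{\lambda}(t^m)=t^{-m}$ for every $m\in Z$. After this substitution, the determinant in (5.3) collapses to the $3\times 3$ determinant shown in (5.4), and the expanded right-hand side is $(n-m)t^{m+n-l}+(l-n)t^{n+l-m}+(m-l)t^{l+m-n}$. Nothing new needs to be checked here beyond quoting Theorem~5.3.

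For the identity $\omega\bigl([t^l,t^m,t^n]_{\omega_1,\delta}\bigr)=-[\omega(t^l),\omega(t^m),\omega(t^n)]_{\omega_1,\delta}$, the plan is to compute both sides directly from (5.4) and compare. On the left, apply $\omega_1$ term-by-term, using $\omega(t^k)=t^{-k}$, to obtain
\[
\omega([t^l,t^m,t^n]_{\omega_1,\delta})=(n-m)t^{\,l-m-n}+(l-n)t^{\,m-n-l}+(m-l)t^{\,n-l-m}.
\]
On the right, evaluate $[t^{-l},t^{-m},t^{-n}]_{\omega_1,\delta}$ by substituting $l\mapsto -l$, $m\mapsto -m$, $n\mapsto -n$ into formula (5.4); each coefficient $(n-m)$, $(l-n)$, $(m-l)$ is replaced by its negative, while each exponent $m+n-l$, $n+l-m$, $l+m-n$ is replaced by its negative. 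Term-by-term comparison then gives precisely the minus sign claimed.

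The only potential obstacle is careful bookkeeping of exponents and signs. Since Laurent monomials form a basis of $A$, equality is checked on basis vectors, so no convergence or completion issues arise, and the result follows. Thus both assertions of the corollary reduce to a direct application of Theorem~5.3 together with the explicit form $\omega_1(t^k)=t^{-k}$.
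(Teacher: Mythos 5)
Your proposal is correct and follows exactly the route the paper intends: the paper states this corollary without a separate proof, treating it as an immediate specialization of Theorem~5.3 at $\lambda=1$, and your term-by-term verification of $\omega([t^l,t^m,t^n]_{\omega_1,\delta})=-[\omega(t^l),\omega(t^m),\omega(t^n)]_{\omega_1,\delta}$ (negated exponents on the left versus negated coefficients and exponents on the right) is the routine computation being left to the reader. Your remark that checking on the monomial basis suffices, by multilinearity of both sides, is a sound way to close the argument.
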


\begin{theorem}

Let $ch F\neq 2$. Then the $3$-Lie algebra $(A, [ , ,
]_{\omega_{\lambda}, \delta}) $ with the multiplication (5.3) for
some $\lambda\neq 0$ is isomorphic to the $3$-Lie algebra $(A, [ , ,
]_{\omega_1, \delta}) $ with the multiplication (5.4), where
$\delta=t\frac{d}{dt}$.
\end{theorem}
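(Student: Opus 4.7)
The plan is to exhibit an explicit associative algebra isomorphism $\sigma: A \to A$ that intertwines $\omega_\lambda$ with $\omega_1$ and commutes with $\delta$, and then to invoke Theorem 3.5 (with $A_1 = A_2 = A$, $\omega_1 = \omega_\lambda$, $\omega_2 = \omega_1$, $\Delta_1 = \Delta_2 = \delta$) to promote $\sigma$ to a $3$-Lie algebra isomorphism between $(A, [\,,\,,\,]_{\omega_\lambda,\delta})$ and $(A, [\,,\,,\,]_{\omega_1,\delta})$. This avoids verifying the $3$-Lie algebra identity by direct computation; all the work reduces to the compatibility conditions of Theorem 3.5.

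The guess is motivated by rescaling the generator: pick $\mu \in F$ with $\mu^2 = \lambda$ and define $\sigma$ as the algebra homomorphism generated by $\sigma(t) = \mu t$, $\sigma(t^{-1}) = \mu^{-1} t^{-1}$, so that $\sigma(t^m) = \mu^m t^m$ for every $m \in \mathbb{Z}$. The key steps I would carry out are: (i) note that $\sigma$ is an associative algebra automorphism of $A$ since it is induced by an invertible rescaling of $t$; (ii) check the compatibility with the involutions by computing $\sigma\omega_\lambda(t^m) = \lambda^m \mu^{-m} t^{-m} = \mu^m t^{-m} = \omega_1\sigma(t^m)$, using $\mu^2 = \lambda$; (iii) check the compatibility with the derivation via $\sigma\delta(t^m) = m\mu^m t^m = \delta\sigma(t^m)$, which is automatic since $\delta$ is diagonal in the basis $\{t^m\}$; (iv) apply Theorem 3.5 to conclude that $\sigma$ is a $3$-Lie algebra isomorphism. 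As an alternative, one can verify the intertwining directly on the defining bracket (5.3) by multiplying out $\sigma([t^l,t^m,t^n]_{\omega_\lambda,\delta})$ and factoring $\mu^{l+m+n}$, which matches $[\sigma(t^l),\sigma(t^m),\sigma(t^n)]_{\omega_1,\delta}$ precisely when $\mu^2=\lambda$.

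The main obstacle is the existence of the square root $\mu$ with $\mu^2 = \lambda$ inside $F$. A diagonal ansatz $\sigma(t^m) = c_m t^m$ with $c_m \in F$ forces, by comparing coefficients in the two brackets, the recursion $c_{m+2} = \lambda c_m$, which in turn forces $c_1^2 = \lambda$; hence no diagonal isomorphism exists over $F$ if $\lambda$ is not a square. This indicates the theorem is meant to be read over a field in which every nonzero element (at least every $\lambda$ of interest) is a square, e.g.\ an algebraically closed $F$ of characteristic $\neq 2$, or else the isomorphism is understood after a base change to $F(\sqrt{\lambda})$. I would mention this hypothesis explicitly and then the verification in steps (ii)--(iv) is a short calculation.
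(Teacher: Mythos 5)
Your proof is essentially the paper's own: the paper defines $\sigma(t^m)=\lambda^{m/2}t^m$ (your $\mu^m t^m$ with $\mu^2=\lambda$), verifies exactly the same three intertwining identities $\sigma(t^mt^n)=\sigma(t^m)\sigma(t^n)$, $\sigma\delta=\delta\sigma$, $\omega_1\sigma=\sigma\omega_\lambda$, and then invokes the isomorphism theorem of Section~3 (Theorem~3.6 in the paper's numbering, the result you cite as Theorem~3.5). Your caveat that a square root $\mu$ of $\lambda$ must exist in $F$ is well taken---the paper silently assumes this by writing $\lambda^{m/2}$---but it does not alter the fact that the two arguments coincide.
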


\begin{proof} Denote  $\omega_{\lambda}:
A\rightarrow A$, $\omega_{\lambda}(t^m)=\lambda^mt^{-m}, \lambda\in
F, \lambda\neq 0$. Then we have $\omega_{\lambda}\delta+\delta
\omega_{\lambda}=0.$

Define $\sigma: A\rightarrow A,$
$\sigma(t^m)=\lambda^{\frac{m}{2}}t^m$, $\forall ~ m\in Z$. Then

$$\sigma(t^m t^n)=\sigma(t^{m}) \sigma(t^{n}), ~~  \delta\sigma
(t^m)=\delta(\lambda^{\frac{m}{2}}t^{m})=m\lambda^{\frac{m}{2}}t^{m}=\sigma\delta(t^{m}),$$
$$\omega_{1}\sigma(t^m)=\omega_{1}(\lambda^{\frac{m}{2}}t^{m})=\lambda^{\frac{m}{2}}t^{-m}=\lambda^{\frac{-m}{2}}(\lambda^mt^{-m})=\sigma
\omega_{\lambda}(t^m).$$

\vspace{2mm}$\delta\sigma
(t^m)=\delta(\lambda^{\frac{m}{2}}t^{m})=m\lambda^{\frac{m}{2}}t^{m}=\sigma\delta(t^{m}).$

\vspace{2mm}\noindent Follows from Theorem 3.7, the result holds. ~~
\end{proof}

\begin{theorem} If $ch F=p > 2$, then for every integer $k\in Z$ and $k\neq 0$,
$$I_k=\{ ~(t^{kp}+t^{-kp})h(t)|~~ \forall h(t)\in A ~\},
~~~ J_k=\{ ~(t^{kp}-t^{-kp})h(t)|~~ \forall h(t)\in A ~\}$$

\vspace{2mm}\noindent are non-zero proper ideals of the $3$-Lie
algebra $(A, [ , , ]_{\omega_1, \delta}) $.
\end{theorem}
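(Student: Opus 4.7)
The plan is to invoke Theorem~3.5, which reduces the $3$-Lie ideal property for $(A,[\,,\,,\,]_{\omega_1,\delta})$ to three checks on the underlying associative structure: $I_k$ and $J_k$ should be associative ideals of $A$, and they should be stable under both $\omega_1$ and $\delta$. Once these are verified, the non-zero and proper claims will follow by direct inspection.

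First I would note that $I_k=(t^{kp}+t^{-kp})A$ and $J_k=(t^{kp}-t^{-kp})A$ are, by construction, principal ideals of the commutative associative algebra $A=F[t,t^{-1}]$, so no work is needed for the associative-ideal condition. The $\omega_1$-stability is similarly immediate: from $\omega_1(t^m)=t^{-m}$, the generator of $I_k$ is fixed while the generator of $J_k$ is sent to its negative, so both subspaces are preserved.

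The decisive step is the $\delta$-stability, and this is where the characteristic hypothesis $\mathrm{ch}\,F=p>2$ genuinely enters. Because $\delta(t^m)=m\,t^m$, in characteristic $p$ one has $\delta(t^{\pm kp})=\pm kp\cdot t^{\pm kp}=0$, so both generators lie in $\ker\delta$. The Leibniz rule will then give $\delta\bigl((t^{kp}\pm t^{-kp})h(t)\bigr)=(t^{kp}\pm t^{-kp})\,\delta(h(t))$, which still belongs to the relevant subspace, and Theorem~3.5 will complete the ideal property.

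For the remaining two claims I would argue as follows. The exponents $\pm kp$ are distinct when $k\neq 0$ (since $\mathrm{ch}\,F\neq 2$), so the generators are nonzero and hence $I_k,J_k\neq 0$. Writing $t^{kp}\pm t^{-kp}=t^{-kp}(t^{2kp}\pm 1)$ exhibits each generator as a unit times a two-term Laurent polynomial; since the units of $F[t,t^{-1}]$ are precisely the nonzero scalar multiples of monomials $t^n$, the factor $t^{2kp}\pm 1$ is not a unit, so $1\notin I_k,J_k$. The only real obstacle here is conceptual rather than technical: one must see that the $\delta$-invariance rests entirely on the characteristic-$p$ annihilation of $t^{\pm kp}$, a Frobenius-type phenomenon with no analogue in characteristic zero.
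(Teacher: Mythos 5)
Your proposal is correct and follows exactly the paper's route: the paper likewise reduces the claim to Theorem~3.5 by asserting that $I_k$, $J_k$ are associative ideals stable under $\omega_1$ and $\delta$. In fact your write-up is more complete than the paper's, which states these stability conditions without the characteristic-$p$ computation $\delta(t^{\pm kp})=\pm kp\,t^{\pm kp}=0$ and gives no argument at all for the non-zero and properness claims (your unit argument in $F[t,t^{-1}]$ covers the latter).
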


\begin{proof} Since for every integer $k\in Z$
and $k\neq 0$, $I_k, J_k$ are ideals of the associative algebra
$A=F[ t^{-1}, t ]$, and satisfy
$$\omega(I_k)\subseteq I_k,~~\omega(J_k)\subseteq J_k, ~~ \Delta(I_k)\subseteq I_k,~~ \Delta(J_k)\subseteq J_k. $$
By Theorem 3.5, $I_k$ and $J_k$ are proper ideals of the $3$-Lie
algebra. ~\end{proof}

By the above discussions, $A_1$ and $A_{-1}$ are two abelian
subalgebras of $(A, [ , , ]_{\omega_{1}, \delta}) $,  and
$A=A_1\dot+ A_{-1},$ where

$$ A_1=\{~ p(t)|~~ p(t)\in A, \omega(p(t))=p(t)~ \}=\{ ~
\sum\limits_{i=r}^sa_i(t^i+t^{-i}),~ a_i\in F, r, s\in Z~ \},$$

$$A_{-1}=\{~ p(t)|~~ p(t)\in A, \omega(p(t))=-p(t)~ \}=\{ ~
\sum\limits_{i=r}^sa_i(t^i-t^{-i}),~ a_i\in F, r, s\in Z ~ \}.$$

\vspace{2mm}If $B=F[t_1^{-1}, \cdots, t_k^{-1}, t_1, \cdots, t_k]$
is the commutative associative algebra of $k$ variable Laurent
polynomials over a field $F$ of characteristic zero. Then for every
$1\leq j\leq k$, $\delta_j=t_{j}\frac{\partial}{\partial t_j}$ are
derivations of $B$, where for every $p(t_1, \cdots,
t_k)=\sum\limits_{i_1\cdots i_k} a_{i_1\cdots i_k}t_1^{i_1}\cdots
t_{k}^{i_k}\in B,$ $\delta_j(p(t_1, \cdots,
t_k))=\sum\limits_{i_1\cdots i_k} i_ja_{i_1\cdots
i_k}t_1^{i_1}\cdots t_{k}^{i_k}$.

\begin{theorem}

Let  $B=F[t_1^{-1}, \cdots, t_k^{-1}, t_1, \cdots, t_k]$ with $ch
F=0$, $\Delta_j=t_j\frac{\partial}{\partial t_j}$ be a derivation of
$B$. For an algebra  homomorphism $\omega: B\rightarrow B$, $\omega$
satisfies $\Delta_j \omega+\omega\Delta_j=0$ and $\omega^2=id_B$ if
and only if
\begin{equation}
\omega(t_1^{r_1}\cdots t_j^{r_j}\cdots
t_k^{r_k})=(\lambda_1^{r_1}\cdots \lambda_j^{r_j}\cdots
\lambda_k^{r_k})t_1^{-r_1}\cdots t_j^{-r_j}\cdots t_k^{-r_k},
\end{equation}
\\ where $~ \lambda_s\in F, ~ \lambda_s\neq 0, ~
 ~ r_s\in Z, ~ 1\leq s\leq k.$
Therefore, $(B, ~ [ , , ]_{\omega, \delta_j})$ is a $3$-Lie algebra
in the multiplication:  $\forall ~~t_1^{r_1}\cdots t_j^{r_j}\cdots
t_k^{r_k}, ~ t_1^{i_1}\cdots t_j^{i_j}\cdots t_k^{i_k}, ~
t_1^{n_1}\cdots t_j^{n_j}\cdots t_k^{n_k}\in B$,

\begin{equation}[t_1^{r_1}\cdots t_j^{r_j}\cdots
t_k^{r_k}, ~~ t_1^{i_1}\cdots t_j^{i_j}\cdots t_k^{i_k}, ~~
t_1^{n_1}\cdots t_j^{n_j}\cdots t_k^{n_k}]_{\omega,
\Delta_j}\end{equation}

\hspace{2.7cm}$=(\lambda_1^{r_1}\cdots \lambda_j^{r_j}\cdots
\lambda_k^{r_k})(n_j-i_j)t_1^{i_1+n_1-r_1}\cdots
t_j^{i_j+n_j-r_j}\cdots t_k^{i_k+n_k-r_k}$

\hspace{2.7cm}$+(\lambda_1^{i_1}\cdots \lambda_j^{i_j}\cdots
\lambda_k^{i_k})(r_j-n_j)t_1^{r_1+n_1-i_1}\cdots
t_j^{r_j+n_j-i_j}\cdots t_k^{r_k+n_k-i_k}$

\hspace{2.7cm}$+(\lambda_1^{n_1}\cdots \lambda_j^{n_j}\cdots
\lambda_k^{n_k})(i_j-r_j)t_1^{i_1-n_1+r_1}\cdots
t_j^{i_j-n_j+r_j}\cdots t_k^{i_k-n_k+r_k}. $

\end{theorem}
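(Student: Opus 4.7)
The plan is to reduce Theorem 5.7 to a characterization of algebra involutions of $B$ that anticommute with the derivations $\Delta_j$, and then invoke Theorem 3.3 to obtain the 3-Lie structure together with the explicit formula (5.7).

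First I would pin down $\omega$ on the generators $t_1, \ldots, t_k$. Since $\omega$ is an algebra endomorphism of $B$ with $\omega(1) = 1$, each identity $\omega(t_s)\omega(t_s^{-1}) = 1$ forces $\omega(t_s)$ to be a unit of $B$. Over a field of characteristic zero, the units of the Laurent polynomial ring $B$ are precisely the nonzero monomials, so $\omega(t_s) = \mu_s t_1^{a_{s1}} \cdots t_k^{a_{sk}}$ for some $\mu_s \in F^{*}$ and integers $a_{si}$.

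Next I would apply the anticommutation $(\Delta_j \omega + \omega \Delta_j)(t_s) = 0$. Using $\Delta_j(t_1^{b_1}\cdots t_k^{b_k}) = b_j\, t_1^{b_1}\cdots t_k^{b_k}$, one gets $\Delta_j \omega(t_s) = a_{sj}\, \omega(t_s)$ and $\omega \Delta_j(t_s) = \delta_{sj}\, \omega(t_s)$, so $a_{sj} + \delta_{sj} = 0$ for every $s$ and $j$. Hence $\omega(t_s) = \lambda_s t_s^{-1}$ with $\lambda_s := \mu_s \neq 0$, and extending multiplicatively through the homomorphism property yields formula (5.5); the relation $\omega^2 = id_B$ is then forced on generators ($\omega^2(t_s) = \lambda_s \omega(t_s^{-1}) = t_s$) and extends by multiplicativity. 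The converse direction is a straightforward verification: the formula (5.5) manifestly defines an algebra homomorphism, squares to the identity, and on each monomial $\Delta_j$ multiplies by $r_j$ while after applying $\omega$ the exponent becomes $-r_j$, producing the required cancellation.

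Finally, with the compatible triple $(B, \omega, \Delta_j)$ in hand, Theorem 3.3 immediately delivers the 3-Lie bracket $[\,,\,,\,]_{\omega, \Delta_j}$ via the determinant formula (3.8). Evaluating (3.8) on three monomials with exponent vectors $(r_s)$, $(i_s)$, $(n_s)$, substituting $\omega(t_1^{r_1}\cdots t_k^{r_k}) = (\prod_s \lambda_s^{r_s})\, t_1^{-r_1}\cdots t_k^{-r_k}$ and $\Delta_j(t_1^{r_1}\cdots t_k^{r_k}) = r_j\, t_1^{r_1}\cdots t_k^{r_k}$, and expanding along the third row, yields formula (5.7) exactly.

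The main obstacle is the structural fact that units of the multivariate Laurent polynomial ring are exactly the nonzero monomials; once that is granted, the argument reduces to a short computation on generators plus routine extension by the homomorphism/derivation properties, and a cofactor expansion. A secondary care-point is making sure the anticommutation condition imposed at the level of generators propagates correctly to all of $B$, which it does because both $\Delta_j \omega$ and $\omega \Delta_j$ are determined on $B$ by their action on $t_1, \ldots, t_k$.
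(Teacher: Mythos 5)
Your proposal is correct and follows essentially the same route as the paper: the paper's own proof is the single line ``completely similar to Theorem 5.3,'' which unwinds to exactly your three steps --- classify $\omega$ on the generators using the fact that the units of the Laurent ring are the nonzero monomials (the argument behind Lemma 5.1), impose the anticommutation relations (Lemma 5.2), and feed the resulting pair $(\omega,\Delta_j)$ into Theorem 3.3, expanding the determinant (3.8) on monomials to obtain (5.7).

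One step deserves an explicit flag. When you conclude $a_{sj}+\delta_{sj}=0$ ``for every $s$ and $j$,'' you are implicitly reading the hypothesis as $\Delta_j\omega+\omega\Delta_j=0$ for \emph{all} $j=1,\dots,k$; that is the only reading under which the ``only if'' half of (5.5) is true. If the anticommutation is assumed for a single fixed $j$ (as the theorem's wording literally suggests), the characterization fails: for $k=2$, $j=1$, the algebra involution determined by $\omega(t_1)=t_1^{-1}t_2$, $\omega(t_2)=t_2$ satisfies $\omega^2=\mathrm{id}_B$ and $\Delta_1\omega+\omega\Delta_1=0$ on all of $B$, yet it is not of the form (5.5). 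So your argument proves the corrected (all-$j$) statement, and it would be worth saying so explicitly; the ``if'' direction and the resulting $3$-Lie structure, which only ever use one $\Delta_j$, are unaffected.
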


\begin{proof} The proof is completely similar to Theorem
5.3. ~~\end{proof}

 \vspace{4mm}In the following we study the
$3$-Lie algebra $(A, \omega, \delta_{2k})$, where  the derivation
$$\delta_{2k}=t^{2k}\frac{d}{dt}\in Der A, ~k\in
Z.$$

\vspace{2mm}From Lemma 5.2, if $ch F\neq 2$, for
$\delta_{2k}=t^{2k}\frac{d}{dt}\in Der(F[t^{-1}, t])$, $k\in Z$,
then the involution  $\omega:A \rightarrow A$ satisfies ~~
$$\delta_{2k} \omega + \omega \delta_{2k} = 0$$ if and only if $\omega$ is defined as  $
\omega(t^m)=(-1)^m t^m, ~ \forall m\in Z$

Therefore, we have the following result.

\begin{theorem}

If $ch F\neq 2$,
 then  $A$ is a $3$-Lie algebra in the multiplication $[ ,  ,
 ]_{\omega, \delta_{2k}}: $ for arbitrary
 $t^l, t^m, t^n\in A$,

\begin{equation}\hspace{-3.5cm}[t^l, t^m,
t^n]_{\omega, \delta_{2k}}=\begin{vmatrix}
(-1)^lt^{l} &(-1)^mt^{m} &(-1)^nt^{n} \\
\\t^l & t^m & t^n  \\
\\lt^{2k+l-1} & mt^{m+2k-1} & nt^{2k+n-1} \\
\end{vmatrix}
\end{equation}

\vspace{2mm}\hspace{4cm}$=\{(-1)^l(n-m)+(-1)^m(l-n)+(-1)^n(m-l)\}t^{2k+l+m+n-1}.~$

\end{theorem}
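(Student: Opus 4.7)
The plan is to reduce the statement to Theorem 3.3 and then read off the explicit multiplication by expanding the determinant. First, I would verify the hypotheses of Theorem 3.3 for the pair $(\omega,\delta_{2k})$ with $\omega(t^m)=(-1)^m t^m$ and $\delta_{2k}=t^{2k}\tfrac{d}{dt}$: the map $\omega$ is an involution of the associative algebra $A$ (multiplicativity and $\omega^{2}=\mathrm{Id}_{A}$ are immediate on the monomial basis since $(-1)^{m+n}=(-1)^{m}(-1)^{n}$ and $(-1)^{2m}=1$), while $\delta_{2k}$ is a standard derivation of $A$. The anticommutation $\omega\delta_{2k}+\delta_{2k}\omega=0$ is exactly what was recorded in Lemma 5.2, and can be checked directly on basis elements via $\delta_{2k}\omega(t^{m})=(-1)^{m}m\,t^{2k+m-1}$ and $\omega\delta_{2k}(t^{m})=(-1)^{2k+m-1}m\,t^{2k+m-1}=-(-1)^{m}m\,t^{2k+m-1}$.

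Once these verifications are in place, Theorem 3.3 directly yields that $(A,[\,,\,,\,]_{\omega,\delta_{2k}})$ is a $3$-Lie algebra with bracket given by the determinant formula (5.8); in particular, no fresh verification of the $3$-Jacobi identity is required, since that work has already been done in the general setting.

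The second step is simply to simplify the determinant in (5.8). Since $\omega(t^{r})=(-1)^{r}t^{r}$ and $\delta_{2k}(t^{r})=r\,t^{2k+r-1}$, every entry of the $3\times 3$ matrix is a scalar multiple of a single monomial in $t$. Expanding along the first row, each of the three $2\times 2$ cofactors collapses to a single monomial of total $t$-degree $2k+l+m+n-1$. Collecting the three contributions yields the coefficient $(-1)^{l}(n-m)+(-1)^{m}(l-n)+(-1)^{n}(m-l)$ in front of $t^{2k+l+m+n-1}$, which is precisely the asserted closed form.

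I do not anticipate any real obstacle: the nontrivial structural content (skew-symmetry and the $3$-Jacobi identity) is entirely absorbed by the earlier Theorem 3.3, so what remains is the bookkeeping of a determinant expansion involving commuting monomials, together with the elementary identity $(-1)^{2k-1}=-1$ used when checking the anticommutation of $\omega$ and $\delta_{2k}$.
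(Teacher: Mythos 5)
Your proposal is correct and follows essentially the same route as the paper: the paper's own proof simply cites Lemma 5.1, Lemma 5.2 and Theorem 3.3, which is exactly your reduction (you merely verify the involution property and the anticommutation $\omega\delta_{2k}+\delta_{2k}\omega=0$ directly on the monomial basis rather than quoting the lemmas). Your determinant expansion giving the coefficient $(-1)^l(n-m)+(-1)^m(l-n)+(-1)^n(m-l)$ of $t^{2k+l+m+n-1}$ is also the computation implicit in the paper's statement, so there is nothing missing.
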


\begin{proof} The result follows from Lemma 5.1.
Lemma 5.2 and Theorem 3.3.~~ \end{proof}

Define linear functions $ \alpha,$ $ \beta,$ $ \gamma$ $: A
\rightarrow F$:
$$\alpha(t^m)=(-1)^m,  ~~ \beta(t^m)=1, ~~ \gamma(t^m)=m, ~~\forall t^m\in
A,$$

\vspace{2mm}\noindent that is, for every
$p(t)=\sum\limits_{i=m}^na_it^i\in A$,

$$\alpha(p(t))=\sum\limits_{i=m}^n(-1)^ia_i, ~ ~
\beta(p(t))=\sum\limits_{i=m}^na_i, ~ ~
\gamma(p(t))=\sum\limits_{i=m}^nia_i.$$

Then Eq.(5.7) can be written as:   $\forall $ ~ $t^l, t^m, t^n\in
A,$

$$[t^l, t^m, t^n]_{\omega,\delta_{2k}}=(\alpha\wedge
\beta\wedge\gamma)(t^l, t^m, t^n) t^{l+m+n+2k-1},$$

\vspace{4mm}\noindent where $(\alpha\wedge \beta\wedge\gamma)(t^l,
t^m, t^n)=\begin{vmatrix}
\alpha(t^l)~  & \alpha(t^m) ~ &\alpha(t^n)~  \\
\beta(t^l) ~ & ~ \beta(t^m) & \beta(t^n) ~  \\
\gamma(t^l) ~ & \gamma(t^m) ~ & \gamma(t^n)~  \\
\end{vmatrix}=\begin{vmatrix}
(-1)^l & (-1)^m &(-1)^n \\
1 & 1 & 1  \\
l & m & n \\
\end{vmatrix}.$

\vspace{4mm}\noindent{\bf Remark  }References \cite{ APP, APPS, F2}
studied $n$-Lie algebras $A(G, f, t)$, where $G$ is an additive
Abelian group, $f: G^n\rightarrow F$. By the above discussions, the
$3$-Lie algebra $(A, [ , , ]_{\omega, \delta_{2k}})$ is isomorphic
to the $3$-Lie algebra $A(Z, f, 2k+1)$ in \cite{ APPS}, where $Z$ is
the set of all integers, $f(l, m, n)=\begin{vmatrix}
(-1)^l & (-1)^m &(-1)^n \\
1 & 1 & 1  \\
l & m & n \\
\end{vmatrix}, $ $\forall ~~ l, m, n\in Z.$

\vspace{4mm}\begin{theorem}

Let $A=F[ t^{-1}, t ]$ be the Laurent polynomials over the field of
complex numbers. Then  for any integer $k$, $k\neq 0$, $3$-Lie
algebra $(A, [ , , ]_{\omega,\delta_{2k}})$ with the multiplication
(5.7) is isomorphic to the $3$-Lie algebra $(A, [ , ,
]_{\omega,\delta_0})$, where $\delta_0=\frac{d}{dt}$, and for every
$t^l, t^m, t^n\in A$,
\begin{equation}[t^l, t^m, t^n]_{\omega,
\delta_0}=\{(-1)^l(n-m)+(-1)^m(l-n)+(-1)^n(m-l)\}t^{l+m+n-1}.
\end{equation}

\end{theorem}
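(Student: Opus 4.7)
The plan is to exhibit an explicit $F$-linear bijection $\sigma : A \to A$ intertwining the two $3$-brackets. Matching the output degrees in formulas (5.7) and (5.8), namely $2k+l+m+n-1$ versus $l+m+n-1$, suggests shifting each basis monomial by $k$: take $\sigma(t^m) = \mu\, t^{m+k}$ for some scalar $\mu \in F^{\times}$. Then $\sigma$ is multiplication by the unit $\mu t^k \in A$, hence an $F$-linear bijection; however it is \emph{not} an associative algebra homomorphism for $k \neq 0$, so Theorem~3.7 cannot be invoked and the bracket identity must be verified directly.

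Concretely, I would choose $\mu \in F$ with $\mu^2 = (-1)^k$ (such $\mu$ exists since $F$ is the field of complex numbers) and then check
\[ \sigma([t^l,t^m,t^n]_{\omega,\delta_{2k}}) = [\sigma(t^l),\sigma(t^m),\sigma(t^n)]_{\omega,\delta_0} \]
on basis monomials, extending to $A$ by trilinearity. Writing $c(l,m,n) = (-1)^l(n-m)+(-1)^m(l-n)+(-1)^n(m-l)$, formula (5.7) gives $\sigma([t^l,t^m,t^n]_{\omega,\delta_{2k}}) = \mu\, c(l,m,n)\, t^{3k+l+m+n-1}$. On the other hand, $[\sigma(t^l),\sigma(t^m),\sigma(t^n)]_{\omega,\delta_0} = \mu^3\,[t^{l+k},t^{m+k},t^{n+k}]_{\omega,\delta_0}$, and applying (5.8) to the shifted exponents together with factoring out $(-1)^k$ from each of $(-1)^{l+k},(-1)^{m+k},(-1)^{n+k}$ yields $\mu^3 (-1)^k c(l,m,n)\, t^{3k+l+m+n-1}$. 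The two sides agree because $\mu^3 (-1)^k = \mu \cdot \mu^2 (-1)^k = \mu$ by the choice of $\mu$.

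The main obstacle is recognizing that Theorem~3.7 cannot be used here: any associative algebra automorphism of $A = F[t,t^{-1}]$ sends $t$ to $c t^{\pm 1}$, and one checks that no such map simultaneously intertwines $\delta_{2k}$ and $\delta_0$. One is therefore forced to leave the associative-algebra framework. Once this is accepted, the form $\sigma(f) = \mu t^k f$ is dictated by matching degrees, the condition $\mu^2 = (-1)^k$ is forced by the parity shift produced when each exponent increases by $k$, and the hypothesis that $F$ be the complex numbers enters only to guarantee the existence of $\sqrt{(-1)^k}$.
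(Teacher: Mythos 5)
Your proof is correct and is essentially the paper's own argument: the paper likewise verifies the intertwining identity directly on basis monomials using a shift map, taking $\sigma(t^m)=t^{m-k}$ when $k$ is even and $\sigma(t^m)=i\,t^{m-k}$ when $k$ is odd, which is (up to inverting the direction of the isomorphism) exactly your $\sigma(t^m)=\mu\,t^{m+k}$ with $\mu^2=(-1)^k$, your condition merely unifying the two parity cases into one formula. One inaccuracy in your side commentary, harmless because your argument never relies on it: the claim that no associative algebra automorphism intertwines $\delta_{2k}$ and $\delta_0$ fails for $k=1$, since $\sigma(t)=-t^{-1}$, i.e.\ $\sigma(t^m)=(-1)^m t^{-m}$, satisfies both $\sigma\delta_{2}=\delta_0\sigma$ and $\sigma\omega=\omega\sigma$, so the paper's transport theorem (Theorem 3.6, cited in the paper as 3.7) could in fact be invoked in that one case.
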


\begin{proof} If $k=2s$, define linear mapping $\sigma: (A,
[ , , ]_{\omega, \delta_0}) \rightarrow (A, [ , , ]_{\omega,
\delta_k})$, $\sigma(t^m)=t^{m-k}$, $\sigma(1)=1, $ for every $
t^m\in A.$ Then for every $t^l, t^m, t^n\in A,$
$$\sigma([t^l, t^m, t^n]_{\omega,
\delta_0})=\{(-1)^l(n-m)+(-1)^m(l-n)+(-1)^n(m-l)\}t^{l+m+n-k-1},$$

\vspace{2mm}\noindent$[\sigma(t^l), \sigma(t^m),
\sigma(t^n)]_{\omega, \delta_{2k}}=[t^{l-k}, t^{m-k},
t^{n-k}]_{\omega, \delta_{2k}}$

\vspace{2mm}\hspace{3.3cm}$=\{(-1)^{l-k}(n-m)+(-1)^{m-k}(l-n)+(-1)^{n-k}(m-l)\}t^{l+m+n-k-1}$

\vspace{2mm}\hspace{3.3cm}$=\{(-1)^l(n-m)+(-1)^m(l-n)+(-1)^n(m-l)\}t^{l+m+n-k-1}.$

If $k=2s+1$,  define linear mapping $\sigma: A\rightarrow A$,
$\sigma(t^m)=it^{m-k}$, $\sigma(1)=1 $, where $i^2=-1$. Then for
every $t^l, t^m, t^n\in A,$
$$\sigma([t^l, t^m, t^n]_{\omega,
\delta_0})=i~\{(-1)^l(n-m)+(-1)^m(l-n)+(-1)^n(m-l)\}t^{l+m+n-k-1},$$

\vspace{2mm}\noindent$[ \sigma(t^l),  \sigma(t^m),
\sigma(t^n)]_{\omega, \delta_{2k}}=[ i t^{l-k}, i t^{m-k}, i
t^{n-k}]_{\omega, \delta_{2k}}$

\vspace{2mm}\hspace{3.3cm}$=-i
\{(-1)^{l-k}(n-m)+(-1)^{m-k}(l-n)+(-1)^{n-k}(m-l)\}t^{l+m+n-k-1}$

\vspace{2mm}\hspace{3.3cm}$=i ~
\{(-1)^l(n-m)+(-1)^m(l-n)+(-1)^n(m-l)\}t^{l+m+n-k-1}$

\vspace{2mm}\hspace{3.3cm}$=\sigma([t^l, t^m, t^n]_{\omega,
\delta_0}).$

\vspace{2mm}\noindent The result holds. ~~
\end{proof}

\begin{theorem}

Let $A=F[ t^{-1}, t ]$ over the field $F$ of complex numbers. Then
$3$-Lie algebra $(A, [ , , ]_{\omega, \delta_0})$ is a simple
$3$-Lie algebra.

\end{theorem}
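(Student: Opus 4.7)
The plan is to show that every nonzero 3-Lie ideal $I$ of $(A, [\,,\,,\,]_{\omega,\delta_0})$ equals $A$; the non-triviality condition $L^1 \ne 0$ is immediate from $[1,t,t^2]_{\omega,\delta_0} = 4t^2 \ne 0$. The main tool is the $\mathbb{Z}/2$-grading $A = A_+ \oplus A_-$, where $A_+$ and $A_-$ are spanned by the monomials of even and odd degree respectively. From the explicit formula (5.8) one computes
\[
[x, t^{2c}, t^{2d}]_{\omega,\delta_0} = 4(c-d)\, t^{2c+2d-1}\, x_{-}, \qquad [x, t^{2c+1}, t^{2d+1}]_{\omega,\delta_0} = 4(d-c)\, t^{2c+2d+1}\, x_{+},
\]
for every $x = x_+ + x_- \in I$ with $x_\pm \in A_\pm$. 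Letting $c \ne d$ vary over $\mathbb{Z}$ shows $t^M x_\pm \in I$ for every odd $M$, and applying the same extraction a second time to the now pure-parity element $t\, x_+$ (respectively $t\, x_-$) supplies the missing shifts, giving $t^N x_\pm \in I$ for every $N \in \mathbb{Z}$. Since $x \ne 0$, at least one of $x_\pm$ is nonzero; call it $u$.

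After a suitable shift $u \mapsto t^{-2r} u$, I may assume (up to swapping the roles of $A_+$ and $A_-$) that $u = a_0 + a_1 t^2 + \cdots + a_k t^{2k}$ with $a_0, a_k \ne 0$. If $k = 0$ then $u$ is a nonzero scalar and $I = A$ at once. For $k \ge 1$ the next step is to prove, by induction on $p \ge 0$, that $t^N u^{(p)} \in I$ for every $N \in \mathbb{Z}$ (where $u^{(p)}$ denotes the $p$-th $t$-derivative). The inductive step rests on the identities
\[
[v,\, 1,\, t^{2k'+1}]_{\omega,\delta_0} \;=\; 2\, t^{2k'+1}\, v' \qquad (v \in A_+)
\]
and
\[
[v,\, 1,\, t^{2k'+1}]_{\omega,\delta_0} \;=\; 2\, t^{2k'+1}\, v' \,-\, (4k'+2)\, t^{2k'} v \qquad (v \in A_-),
\]
applied to $v = u^{(p)}$. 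In each case the terms other than $2\, t^{2k'+1}v'$ are already in $I$ by the inductive hypothesis, so $t^M v' \in I$ for every odd $M$; one more pass of the extraction argument applied to $t v' \in I$ then covers the even $M$ as well.

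Once the induction is complete, taking $p = 2k$ yields $u^{(2k)} = (2k)!\, a_k$, a nonzero constant, whence $t^N \in I$ for every $N$ and therefore $I = A$. The delicate point I expect to be the main obstacle is precisely this inductive step: the parity of $u^{(p)}$ alternates with $p$, so the bracket identities and the extraction steps must be verified separately in the two parity cases. The computations themselves are routine applications of (5.8), but the bookkeeping across the two parities is where slip-ups would most easily occur.
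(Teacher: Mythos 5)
Your proof is correct, and it takes a genuinely different route from the paper's. The paper's engine is its Eq.~(5.9): bracketing with the pair $(t^m,t^{1-m})$ acts diagonally on monomials, $[t^l,t^m,t^{1-m}]_{\omega,\delta_0}=c_{l,m}\,t^l$ with $c_{l,m}\neq 0$ exactly when $l\neq m$ and $l\neq 1-m$; given a nonzero ideal $I$, the paper first isolates a single monomial $t^l\in I$ inside an arbitrary nonzero element by a Vandermonde-determinant argument built on this diagonal action, and then propagates from that one monomial to all of $A$ by two further applications of the same formula. You never isolate monomials at all: your first pair of identities shows that for any $x\in I$ the parity components $x_\pm$ admit multiplication by every power of $t$ without leaving $I$, and your second pair shows that bracketing with $1$ and odd monomials realizes differentiation modulo terms already in $I$; induction on the order of the derivative then collapses a normalized $u=a_0+a_1t^2+\cdots+a_kt^{2k}\in I$ to the nonzero constant $u^{(2k)}=(2k)!\,a_k$, forcing $I=A$. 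I checked your four bracket identities against Eq.~(5.8) and the alternating-parity bookkeeping you flagged as delicate (in the odd-parity case the correction term $(4k'+2)t^{2k'}v$ is absorbed by the inductive hypothesis, and the second extraction pass applied to $tv'$ supplies the even shifts); all of it is sound, with characteristic zero ensuring the integer scalars $4(c-d)$, $2l$ and $(2k)!\,a_k$ are nonzero. What your route buys is that it dispenses with the Vandermonde step, which the paper states only tersely (one must check that the eigenvalues $c_{i,m}$ are pairwise distinct for suitable $m$, or iterate the operator to generate independent combinations); the cost is the two-parity case analysis and the induction, so your argument is longer but more self-contained, and you also verify explicitly that the derived algebra is nonzero via $[1,t,t^2]_{\omega,\delta_0}=4t^2$, a point the paper leaves implicit. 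Both proofs hinge on characteristic zero, as they must: in characteristic $p>2$ the paper exhibits proper ideals of this very algebra.
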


\begin{proof} By Eq.(5.8), for every $t^l, t^m, t^n\in A$,

\begin{equation}[t^l, t^m,
t^{-m+1}]_{\omega,
\delta_0}=\{(-1)^l(-m+1-m)+(-1)^m(l+m-1)+(-1)^{-m+1}(m-l)\}t^l\end{equation}

\hspace{3.9cm}$=
\begin{array}{ll} \left\{\begin{array}{l} 0, ~ if~
l=m ~ or ~ l=-m+1,\\
\{(-1)^{l+1}2m+(-1)^m2l+(-1)^{l}+(-1)^{m+1}\}t^{l} \neq 0, ~ others.
\end{array}\right.
\end{array}$

\vspace{2mm}If $m=0, n=1$, then we have

$[t^l, 1, t^1]_{\omega, \delta_0}=\{(-1)^l+(l-1)+l)\}t^{l}=
\begin{array}{ll} \left\{\begin{array}{l} 0, ~ if ~
l=1, ~ 0,\\
\{2l+(-1)^l-1\}t^l, ~ others.
\end{array}\right.
\end{array}$

\vspace{2mm}Let $I$ be a non-zero ideal of the $3$-Lie algebra $(A,
[ , , ]_{\omega, \delta_0})$.  For every non-zero vector
$p(t)=\sum\limits_{i=r}^sa_it^i\in I$, where $a_s\neq 0, a_r\neq 0,$
and $m$ is a positive integer such that $m
> s$ and $-m+1 < r.$ Thanks to Eq.(5.8) and the Vandermonde
determinant, we have $t^l\in I$ if $a_l\neq 0$ for $r\leq l\leq s$.
We  conclude that there is an integer $l$ such that $t^l\in I.$

Now we prove $I=A$. If $t^m\in I$, then by Eq.(5.9), for every
$l\neq m$ and $l\neq -m+1$,  we have $t^l\in I$. Therefore, we can
choose $j$ satisfying $j\neq \pm m,$  $j\neq -m+1$ (and then
$-m+1\neq -j+1$) such that $t^j\in I$.
 Again by Eq.(5.8), we have $t^m\in I$ and $t^{-m+1}\in I.$

Summarizing above discussions, we get $I=A.$ Therefore, $(A, [ , ,
]_{\omega, \delta_0})$ is a simple $3$-Lie algebra. ~~ \end{proof}

\begin{theorem}

If $ch F=p > 2$, then the $3$-Lie algebra $(A, [ , , ]_{\omega,
\delta_0})$ in Theorem 5.8 is a non-simple $3$-Lie algebra.
\end{theorem}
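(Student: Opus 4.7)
The plan is to invoke Theorem~3.5, which guarantees that any associative ideal of $A$ stable under both $\omega$ and $\delta_0$ is automatically an ideal of the corresponding $3$-Lie algebra. So the task reduces to exhibiting a nonzero proper associative ideal $I\subseteq A$ that is fixed by $\omega$ and by $\delta_0$.

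The key observation is that in characteristic $p$ the derivation $\delta_0=d/dt$ annihilates every Laurent polynomial in $t^p$, since $\delta_0(t^{pk})=pk\cdot t^{pk-1}=0$. I would therefore take $I=(t^{2p}-1)\subseteq A$, the principal associative ideal generated by $t^{2p}-1$. The verification breaks into three steps. First, $I$ is nonzero (obviously) and proper, since the evaluation homomorphism $A\to F$, $t\mapsto 1$, kills $I$ but not $1$. Second, because $\omega(t^{2p}-1)=(-1)^{2p}t^{2p}-1=t^{2p}-1$, the generator is fixed by $\omega$, so $\omega(I)=I$. Third, $\delta_0(t^{2p}-1)=2p\cdot t^{2p-1}=0$ in characteristic $p$, and the Leibniz rule then gives $\delta_0((t^{2p}-1)h(t))=(t^{2p}-1)h'(t)\in I$ for every $h(t)\in A$. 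Theorem~3.5 then delivers $I$ as a nonzero proper ideal of $(A,[\,,\,,\,]_{\omega,\delta_0})$, so the algebra is non-simple.

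There is no serious analytic obstacle here; the real content is choosing the right ideal. The candidate $t^{2p}-1$ is forced by two competing constraints: its derivative must vanish in characteristic $p$ (which forces a polynomial in $t^p$), and it must be $\omega$-fixed up to a unit. Because $\omega$ acts by $t\mapsto -t$ and $p$ is odd, a polynomial in $t^p$ alone is $\omega$-invariant only when it is an \emph{even} function of $t^p$; this is exactly why we must pass from $t^p$ to $t^{2p}$. The same recipe in fact produces infinitely many proper ideals, generated by any non-unit Laurent polynomial in $t^{2p}$, in parallel with the family $I_k,J_k$ of Theorem~5.6.
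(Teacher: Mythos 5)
Your proof is correct and takes essentially the same route as the paper: the paper likewise reduces the claim to its ideal theorem (Theorem 3.5) by exhibiting associative ideals stable under $\omega$ and $\delta_0$, namely $I_k=(t^{kp}+t^{-kp})A$ and $J_k=(t^{kp}-t^{-kp})A$ for $k\neq 0$. In fact your ideal $(t^{2p}-1)A$ coincides with the paper's $J_1=(t^{p}-t^{-p})A$, since $t^{2p}-1=t^{p}(t^{p}-t^{-p})$ and $t^{p}$ is a unit (so your closing remark that one \emph{must} pass to even functions of $t^{p}$ is slightly too strong: a generator sent to $-$ itself by $\omega$, such as $t^{p}-t^{-p}$, also yields an $\omega$-stable ideal).
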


\begin{proof} Suppose  $I_k=\{ (t^{kp}+t^{-kp})h(t)|~~
\forall h(t)\in A \}$, $J_k=\{ (t^{kp}-t^{-kp})h(t)|~~ \forall
h(t)\in A \}$,  $k\neq 0$. Then $\omega(I_k)\subseteq I_k$, $
\delta_0(I_k)\subseteq I_k$, $\omega(J_k)\subseteq J_k$ and
$\delta_0(J_k)\subseteq J_k$. Thanks to Theorem 3.4, $I_k$ and $J_k$
are non-zero proper ideals of the $3$-Lie algebra $(A, [ , ,
]_{\omega, \delta_0})$. Therefore, the result holds. ~~ \end{proof}

\vspace{2mm}By the above discussions, if  $ch F=p > 2$, then $J_1=\{
(t^{p}-t^{-p})h(t)|~~ \forall h(t)\in A \}$  is an ideal of the
$3$-Lie algebra $(A, [ , , ]_{\omega, \delta_0})$, and satisfies
$\omega(J_1)\subseteq J_1$ and $\delta_0 (J_1)\subseteq J_1$. Then
we get the quotient $3$-Lie algebra of $(A,[ , , ]_{\omega,
\delta_0}) $ relating to the ideal $J_1$, which is denoted by
$(\bar{A},[ , , ]_{\omega, \delta_0})$. The multiplication of
$\bar{A}=A/J_1$ in the basis $\bar{t}^{-p+1}, \cdots \bar{t}^{-1},
\bar{1}, \bar{t}, \cdots, \bar{t}^p$ as follows
\begin{equation}[\bar{t}^{~l}, \bar{t}^{~m}, \bar{t}^{~n}]_{\omega,
\delta_0}=\{(-1)^l(n-m)+(-1)^m(l-n)+(-1)^n(m-l)\}\bar{t}^{~l+m+n-1},
\end{equation}
 where $\bar{t}^{p}=\bar{t}^{-p}.$

\begin{theorem}
The $3$-Lie algebra $(\bar{A}, [ , , ]_{\omega, \delta_0})$ is a
simple $3$-Lie algebra, where  $[ , , ]_{\omega, \delta_0}$ is
defined as Eq. (5.10) and $\dim \bar{A}=2p$.
\end{theorem}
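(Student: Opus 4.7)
The plan is to mimic Theorem 5.9, carefully adapted to the quotient $\bar A$ and to characteristic $p$. First observe that the bracket (5.10) descends to $\bar A$: since $p$ is odd, $(-1)^{\,l+2p}=(-1)^l$, so reading exponents modulo $2p$ causes no ambiguity in the coefficients.

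Stage 1 (extracting a basis element from any non-zero element of $I$). Let $v=\sum_{i=-p+1}^{p} a_i \bar t^{\,i}$ be a non-zero element of $I$. By (5.10), the operator $D_m:=\mathrm{ad}(\bar t^{\,m},\bar t^{\,-m+1})$ is diagonal on the basis:
\[
D_m(\bar t^{\,i})=c_i(m)\,\bar t^{\,i},\qquad c_i(m)=(-1)^i(1-2m)+(-1)^m(2i-1).
\]
A routine parity split gives $c_i(m)=2(i-m)$ when both $i,m$ are even, with three analogous cases. The difference $D_0-D_2$ then acts as $+4$ on the even part and $-4$ on the odd part; combining $(D_0-D_2)v$ with $v$ isolates the components $v_+,v_-\in I$. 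Restricting to $v_+$, the operator $T:=\tfrac12 D_0$ acts as multiplication by $i$, so $T^k v_+=\sum_{i\text{ even}}a_i i^k\,\bar t^{\,i}\in I$ for all $k\ge 0$. Because $p>2$ is odd, doubling is a bijection of $F_p$, and the $p$ even exponents $\{-p+1,-p+3,\dots,p-1\}$ reduce bijectively onto $F_p$; thus Lagrange interpolation in $T$ over these pairwise distinct eigenvalues recovers each $\bar t^{\,i_0}$ with $a_{i_0}\ne 0$. The odd component is treated symmetrically. Either way, some $\bar t^{\,l}$ lies in $I$.

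Stage 2 (from one basis element to all). Given $\bar t^{\,l}\in I$ and any target $k\in\{-p+1,\dots,p\}$, pick $s\equiv k-l+1\pmod{2p}$ and consider $[\bar t^{\,l},\bar t^{\,m},\bar t^{\,s-m}]=c\,\bar t^{\,k}$. Using $(-1)^{s-m}=(-1)^s(-1)^m$, the coefficient collapses to
\[
c=\begin{cases}(s-2m)\bigl[(-1)^l-(-1)^m\bigr], & s\text{ even},\\[3pt] (-1)^l(s-2m)+(-1)^m(2l-s), & s\text{ odd}.\end{cases}
\]
For $s$ even, pick $m$ of parity opposite to $l$ with $2m\not\equiv s\pmod p$; for $s$ odd, split on the parity of $m$ (the formula reduces to $2(-1)^l(l-m)$ or $2(-1)^l(s-l-m)$) and pick $m$ avoiding at most one residue mod $p$. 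In each case the $p$ parity-restricted candidates exclude at most one residue, leaving $\ge p-1\ge 2$ valid choices; hence $c\ne 0$ and $\bar t^{\,k}\in I$. Therefore $I=\bar A$.

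The main obstacle is Stage 1: reproducing a Vandermonde-style extraction in characteristic $p$. The decisive ingredient is the numerical coincidence that, for an odd prime $p$, each parity class in $\{-p+1,\dots,p\}$ contains exactly $p$ integers whose residues modulo $p$ are all distinct. This guarantees both the clean parity separation via $D_0-D_2$ and the invertibility of the interpolation matrix, which together transport the char-$0$ argument of Theorem 5.9 to the finite-dimensional quotient $\bar A$ in characteristic $p$.
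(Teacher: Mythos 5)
Your proof is correct, and although it runs on the same basic machinery as the paper's --- the diagonal operators $D_m=\mathrm{ad}(\bar t^{\,m},\bar t^{\,1-m})$ combined with a Vandermonde/interpolation argument --- it deploys them along a genuinely different and, in fact, more robust route. The paper proceeds by a four-case analysis on the support of an element $h\in\bar I$, funneling every case toward showing $\bar t^{\,p}\in\bar I$ and then spreading out from $\bar t^{\,p}$ (its Case I, which is $D_p$ in your notation). Its pivotal Case II applies the single operator $D_{p-1}=\mathrm{ad}(\bar t^{\,p-1},\bar t^{\,2-p})$, whose eigenvalue on $\bar t^{\,i}$ is $3(-1)^i+2i-1$, and asserts that these eigenvalues are pairwise distinct for $i\neq p-1,\,2-p$. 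That assertion is false: for even $i$ the eigenvalue is $2i+2$ and for odd $j$ it is $2j-4$, so eigenvalues collide whenever $j\equiv i+3\pmod p$ (for $p=5$, $i=0$ and $j=3$ both give $2$). Consequently the paper's Vandermonde step can only isolate sums of cross-parity pairs of monomials, and its Cases II--III have a genuine gap. Your Stage 1 circumvents exactly this defect: the difference $D_0-D_2$ has only the two eigenvalues $\pm 4$, which cleanly separates the two parity components, and \emph{within} a fixed parity class the eigenvalues of $D_0$ really are pairwise distinct mod $p$, since two same-parity exponents in $\{1-p,\dots,p\}$ congruent mod $p$ would have to differ by the odd number $p$. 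Your Stage 2 then reaches an arbitrary $\bar t^{\,k}$ from any single $\bar t^{\,l}$ by one uniform count of admissible brackets, replacing the paper's special-purpose computations (its Cases I and IV). So your argument not only proves the theorem but also repairs a flaw in the published proof; the paper's strategy, by contrast, is repairable only after inserting something like your parity-splitting step.
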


\begin{proof} Let $\bar{I}$ be a nonzero ideal of the
$3$-Lie algebra $(\bar{A}, [ , , ]_{\omega, \delta_0})$. Suppose
$h(\bar{t})=\sum\limits_{i=1-p}^p a_{i}\bar{t}^{i}\in \bar{I}$ and
$h(\bar{t})\neq 0.$

Case I. If $h(\bar{t})=\sum\limits_{i=1-p}^p
a_{i}\bar{t}^{i}=\bar{t}^p$. For every $l$ satisfying $1-p < l < p$,
since $$[\bar{t}^l, \bar{t}^p, \bar{t}^{1-p}]=\{(-1)^l-2 l+1\}
\bar{t}^{l}\in \bar{I}, ~~ (-1)^l-2 l+1\neq 0, ~~ l\neq 1-p,$$

\vspace{2mm}\noindent we get $\bar{t}^l\in \bar{I}$ for $1-p <l\leq
p.$

Thanks to $p\geq 3$, $\bar{t}^2\in \bar{I}$, then $-2
\bar{t}^{p-1}=[\bar{t}^2, \bar{t}^{-1}, \bar{t}^{1-p}]\in \bar{I}$.
It follows $\bar{I}=\bar{A}.$

Case II. If $h(\bar{t})=\sum\limits_{i=1-p}^p a_{i}\bar{t}^{i}\in
\bar{I}$ satisfies $a_p\neq 0$, and there is an integer $k$
satisfying  $1-p\leq k< p$ and $a_k\neq 0$. Without loss of
generality, we suppose  $a_p=1$. By Eq.(5.10)
$$[h(\bar{t}), \bar{t}^{p-1}, \bar{t}^{2-p}]=\sum\limits_{i=1-p}^p
a_{i}(3(-1)^i+2i-1)\bar{t}^{i}\in I.$$

\vspace{2mm}\noindent Since $3(-1)^i+2i-1=0$ if and only if $i=p-1$
or $i=2-p$,  and $3(-1)^i+2i-1=3(-1)^j+2j-1$ if and only if $i=j$
for $i\neq p-1, i\neq 2-p, j\neq p-1, j\neq 2-p$, we obtain
$\bar{t}^p\in I$ (using Vandermonde determinant). Follows the
discussions of the Case I, $ \bar{I}=\bar{A}$.

Case III. If $h(\bar{t})=\sum\limits_{i=1-p}^p
a_{i}\bar{t}^{i}=\sum\limits_{i=1-p}^s a_{i}\bar{t}^{i}$, where  $s
< p$, $a_s=1$ (that is, $a_p=a_{p-1}=\cdots=a_{s+1}=0$) and there is
an integer $k$ satisfying $1-p\leq k < s$ such that $a_k\neq 0$.
Then $s
>1-p$ and
$$[h(\bar{t}), \bar{t}^{p}, \bar{t}^{1-s}]=\sum\limits_{i=1-p}^s
a_{i}\{(-1)^i(1-s)-i-s+1+i(-1)^{s}\}\bar{t}^{i+p-s}=\sum\limits_{j=1-s}^pb_j\bar{t}^j\in
\bar{I},$$

\vspace{2mm}\noindent where
$b_i=a_{i}((-1)^i(1-s)-i-s+1+i(-1)^{s})$. We obtain

\vspace{2mm}\noindent
$b_p=(-1)^s(1-s)-s-s+1+s(-1)^{s}=-2s+(-1)^s+1\neq 0$ since $1-p < s
< p$.

\vspace{2mm}\noindent Follows from Case II, $\bar{I}=\bar{A}.$

Case IV. If $h(\bar{t})=\sum\limits_{i=1-p}^p
a_{i}\bar{t}^{i}=\bar{t}^l$, where $l <  p.$ If $l < p-1$, then

\vspace{2mm}\noindent$[\bar{t}^l, \bar{t}^{p},
\bar{t}^{l+1}]=\{(-1)^l(2l+1)+1\}\bar{t}^p\in \bar{I},$ and
$(-1)^l(2l+1)+1\neq 0,$ we obtain $\bar{t}^p\in \bar{I}.$

\vspace{2mm} If $l=p-1,$ then $[\bar{t}^{p-1}, \bar{t}^{p},
\bar{t}^{-p+2}]=4\bar{t}^p\in \bar{I}.$ Therefore, $\bar{t}^p\in
\bar{I}.$

Summarizing above discussions, $\bar{I}=\bar{A}$. It follows the
result. ~~
\end{proof}

\begin{flushleft}
\section{ conclusions and discussions }
\end{flushleft}
\setcounter{equation}{0}
\renewcommand{\theequation}
{6.\arabic{equation}}

Since the multiple multiplication, constructions of $n$-Lie algebras
is a continuously difficult problem in the structure theory of
$n$-Lie algebras,  for $n\geq 3$.

In \cite{F, D1, APP, APPS}, $n$-Lie algebras are realized by
 associative commutative algebras and its arbitrary $n$ pairwise
commuting derivations, and linear functions.

Papadopoulos in (\cite{G}) constructed $3$-Lie algebras by Dirac
$\gamma$-matrices.  Let $A$ be spanned by the four-dimensional
$\gamma$-matrices $(\gamma^\mu)$ and let
$\gamma^5=\gamma^1\dots\gamma^4$. Then the product
\begin{equation}
[x,y,z]=[[x,y]\gamma^5,z],\;\;\forall a,b,c\in A.
\end{equation}
defines a $3$-Lie algebra which is isomorphic to the unique simple
3-Lie algebra (\cite{F}).

In \cite{HIM}  $3$-Lie algebras are constructed from metric Lie
algebras. Let $(\frak g, B)$ be a metric Lie algebra over a field
$\mathbb F$, that is, $B$ is a nondegenerate symmetric bilinear form
on $\frak g$ satisfying $B([x, y], z)=-B(y, [x, z])$ for every $x,
y, z\in \frak g$.
 Suppose $\{x_1, \cdots, x_m\}$ is a basis of $\frak g$
and $[x_i, x_j]=\sum\limits_{k=1}^m a_{ij}^kx_k, ~1\leq i, j\leq m.$
Set \begin{equation}\frak g_0=\frak g\oplus \mathbb F x^0\oplus
\mathbb F x^{-1}~ \mbox{ (the direct sum of vector
space)}.\end{equation}  Then there is a 3-Lie algebra structure on
$\frak g_0$ given by
\begin{equation}
[x_0, x_i, x_j]=[x_i, x_j], ~ 1\leq i, j\leq m; ~ [x^{-1}, x_i,
x_j]=0, ~ 0\leq i, j\leq m;\end{equation}\begin{equation} [x_i, x_j,
x_k]=\sum\limits_{s=1}^ma_{ij}^sB(x_s, x_k)x^{-1}, ~1\leq i, j,
k\leq m.\end{equation} And $A$ is a metric $3$-Lie algebra in the
multiplication (6.4) and (6.5).

In \cite{BBW}, $3$-Lie algebras are realized by Lie algebras and
linear functions. Let $(L, [ , ])$ be a Lie algebra, $f\in L^*$
satisfying $f([x, y])=0$  for every $x, y\in L$. Then $L$ is a
$3$-Lie algebra in the multiplication
\begin{equation}[x, y, z]_f=f(x)[y, z]+f(y)[z, x]+f(z)[x, y],
\forall x, y, z\in L. \end{equation} And it is proved in \cite{BBW}
that every $m$-dimensional $3$-Lie algebras can be obtained by the
multiplication (6.2) and (6.6) for $m\leq m$.

 Awata, Li and et al in
 \cite{ALMY} constructed a $2$-step solvable $3$-Lie algebra from $(n\times
 n)$-matrices. Let
$\frak g=gl(m,\mathbb F)$ be the general linear Lie algebra. Then
there is a 3-Lie algebra structure on $\frak g$ defined by
\begin{equation}
[A, B, C] = (trA)[B, C] + (trB)[C, A] + (trC)[A, B], \;\;\forall A,
B, C\in \frak g.\end{equation}

In this paper we construct $3$-Lie algebras by associative
commutative algebras and their derivations and involutions. From
Example 4.1 and 4.2 we can obtain $3$-Lie algebras from any
$\alpha\in Hom(G^+, F^+)$ which is not isomorphic to the $3$-Lie
algebra obtain by Eq.(6.7), where $G$ is the set of all $(n\times
n)$-matrices over a field $F$.

So we may provide a problem that is how can we realize $3$-Lie
algebras by Lie algebras, and associative commutative algebras with
multilinear functions and general linear mappings.

\section*{Acknowledgements}
The first author (R. Bai) was supported in part by  the Natural
Science Foundation of Hebei Province (A2010000194).

\bibliography{}

\end{document}